\journal{Signal Processing}
\newtheorem{theorem}{{Theorem}}
\begin{document}

\begin{frontmatter}
\title{Distributed Topology Design for Network Coding Deployed Networks}

%
%
%

\author[rice,baylor]{Minhae Kwon}
\ead{minhae.kwon@rice.edu}
\fntext[label2]{This work was mainly performed while the first author was with Ewha Womans University.}

\author[ewha]{Hyunggon Park\corref{mycorrespondingauthor}}
\cortext[mycorrespondingauthor]{Corresponding author}
\ead{hyunggon.park@ewha.ac.kr}

\address[rice]{Department of Electrical and Computer Engineering, Rice University, Houston, TX, USA}
\address[baylor]{Department of Neuroscience, Baylor College of Medicine, Houston, TX, USA}
\address[ewha]{Department of Electronic and Electrical Engineering, Ewha Womans University, Seoul, Republic of Korea}

\begin{abstract}
In this paper,  we propose a solution to the distributed topology
formation problem for large-scale sensor networks with multi-source multicast flows.
The proposed solution is based on game-theoretic approaches in conjunction with
network coding.
The proposed algorithm requires 
significantly low computational complexity,
while it is known as NP-hard to find an optimal topology for network coding
deployed multi-source multicast flows. In particular, 
we formulate the problem of
distributed network topology formation as a \emph{network formation game} by
considering the nodes  in the network as players that  can take actions  for making
outgoing links. The proposed solution decomposes the original game that 
consists of multiple
players and multicast flows into independent \emph{link formation
games} played by only two players with a unicast flow.   
We also show that the proposed algorithm is guaranteed to determine at least one stable
topology.
Our simulation results confirm that the computational complexity of the
proposed solution is low enough for practical deployment in large-scale networks.
\end{abstract}

\begin{keyword}
network coding, game theory, topology design, distributed solution, multi-source multicast flows
\end{keyword}
\end{frontmatter}


\section{Introduction}


Modern mobile devices can be considered as sophisticated computing and networking platforms with enhanced
sensor capabilities.  
For example, recent mobile medical devices are equipped with significantly
advanced medical sensors that can precisely record the electrocardiography (ECG) signals, body
temperature, blood glucose levels, heart rates, blood oxygen saturation, etc.
By using these sensor data, knowledge extraction, inference and prediction become possible. 
%
%
%
%
%
%
In order for an efficient 
data analysis, the data 
need to be collected in a dedicated storage (e.g., database). 
However, it is not always possible to 
directly deliver data from sources to destinations. Rather, 
the sensor data are often delivered over infrastructureless wireless networks  
with an ad hoc manner. 
With the support of autonomous networking technologies 
(e.g., Qualcomm Wi-Fi SON~\cite{wifison} and Bluetooth mesh networking), 
recent mobile devices can instantaneously form 
large-scale ad hoc networks by making connections among them 
and 
by transmitting the data obtained from 
their sensors and relaying data from other devices. 
Therefore, it is essential to design a stable network topology that can  
simultaneously deliver multiple data flows 
with improved source-to-destination connectivity and network throughput given 
severe power constraints~\cite{li2013blind, li2015energy}.

One of the challenges in designing a network topology is high computational
complexity involved in finding the optimal solution in a large-scale
network~\cite{kar2008, li2014energy, li2018joint}. 
Because recent sensor networks contain a large number of sensor nodes, the number of
potential network topologies increases exponentially with the number of nodes.
Therefore, it is difficult in
general to 
solve the optimization problem for network topology 
unless it is formulated as a special class of optimization problem
(e.g., the convex optimization problem). 
In order to lower the computational complexity associated with finding
optimal network
topology, distributed approaches are often deployed~\cite{mhkwon_ICC17, mhkwonTMC2019, kim2006, komali2008, xu2016}. 

Another challenge in the considered large-scale sensor networks comes 
from multi-source multicast
flows, which are inevitable in ad hoc networks~\cite{chong2003, kwonSPL2017}. 
Multiple sources can be included in the network because sensor nodes can generate
data based on their own sensing operations and deliver information to a set of target
destination nodes, resulting in multicast flows in the network. Examples of
multi-source multicast flows in networks include a sensor grid~\cite{Aalamifar2018}, a
healthcare wireless sensor network~\cite{Johari2018}, and the Internet of
Things~\cite{Deligiannis2017}. 
While multi-source multicast flows frequently overlap in network paths, only one flow
can be delivered at a time.
Such bottleneck paths where the flows are overlapped in a node can incur 
delay in data delivery, resulting in network
throughput degradation~\cite{bellman1953bottleneck, berman1990constrained, punnen1996fast}.
Therefore, the incoming rate of a node should be
taken into account as a constraint for the network topology design problem, such that it
should not exceed the link capacity.
As a solution to the problem of network topology design with the constraint, 
network coding~\cite{Ahlswede2000, chi2008}
can be deployed~\cite{mhkwon_ICC17, mhkwonTMC2019}. 
Network coding is widely known to
have several advantages, such as efficient resource usage (e.g., bandwidth and
power), and improved robustness and throughput~\cite{ Prior2014, Greco2015, cao2014stackelberg, li2014retransmission, mhKWON2014WCNC}.
In this paper, we employ inter-session network coding~\cite{Yang2008} which combines
multiple packets from different sources into a single packet before
transmission~\cite{Chou2007}.
However, 
it is still challenging to design a low-complexity strategy for topology formation
since it is 
an NP-hard
problem to find the optimal network topology in 
a network with multi-source multicast flows, where 
network coding is blindly deployed~\cite{Lehman2004}. 

In this paper, we propose a solution to distributed network topology design that
overcomes the challenges discussed above, while explicitly considering the
multi-source multicast flows in large-scale sensor networks. 
Specifically, we adopt a
game-theoretic approach to formulate a distributed design problem of 
network topology as a
\emph{network formation game}.
The nodes in the network are considered as players in
a game, which can decide to make connections with their neighbor nodes
by considering their utility functions. We design the utility function such that it
represents the 
rewards and costs
associated with making connections. 
In particular, the reward included 
in the utility function
represents the amount of distance reduction towards the destination, 
which can shorten 
the distance from sources to destinations, leading to potential 
network throughput improvement.
Hence, 
the reward of each node can be differently assigned by its location in the network. 
For example, 
a high reward is assigned to a node close to the destination so that 
the node is encouraged to make direct links to the destination, which lead to an 
improved 
source-to-destination connectivity. 
In contrast, a low reward is assigned to a node far from the
destination, such that the node tries to make alternative links by taking advantage 
of 
higher path diversity. 
We impose the cost associated with link formation on the utility function to prevent
nodes from
making redundant outgoing links. Therefore, each node can build connections that can
maximize its utility by explicitly considering the tradeoffs between the rewards and
the cost. 
We claim that the proposed approach is
indeed a distributed solution to network topology formation  because each node 
determines its own action to build links.
Unlike a centralized optimization solution, which must evaluate all possible potential
network topologies and thereby incur high computational complexity, the proposed solution
enables each node to choose its own actions, leading to significantly lower
overall complexity. 
In order to take into account the constraints on the incoming rate, which eventually
leads to a low-complexity solution to the topology design, 
the proposed approach adopts network coding~\cite{Ahlswede2000}. 
Since network coding operations 
combine multiple incoming packets into a single packet, 
the outgoing rate of a node can always be fixed so that it 
can
eliminate the constraints on the incoming rate for a node. 
This enables a node to build its
outgoing links without considering the link formations of other nodes, which means
that decisions about link formation can be made between only two nodes. Therefore, an
$n$-player network formation game that includes multicast flows can be decomposed
into independent $2$-player \emph{link formation games} with a unicast flow, as we
analytically show in this paper. Because the complexity required to solve a
$2$-player link formation game with a unicast flow is significantly lower than that
needed for an $n$-player network formation game with multi-source multicast flow, the
overall complexity can be significantly reduced. Note that if network coding is
not deployed, the $n$-player network formation game cannot be decomposed into $2$-player link formation games.

The main contributions of this paper can be summarized as follows.
\begin{itemize}
\item We formulate the problem of network topology design as a network formation
  game, which leads to a distributed strategy for topology formation.
\item We analytically show that network coding decomposes the network formation game into
link formation games, leading to an algorithm with significantly low complexity.
\item We design a utility function for the network formation game such that nodes can
  explicitly consider the tradeoff between the distance reduction and the cost 
  associated with making links.
\item We quantitatively evaluate the proposed solution and show that the proposed
  solution eventually 
  leads to increased
  network throughput and a reduced number of unnecessary redundant links between
  nodes.
\end{itemize}

Note that the focus of this paper is not on the code design for inter-session network
coding, which has been extensively studied in prior works~\cite{
Kim2009jsac, Khreishah2010, Bourtsoulatze2014TCOM,  Hulya2009, Bourtsoulatze2014TMM,
Douik2016}.  
Rather, we focus on how to design network \emph{topologies} that can lead to improved
network performance (e.g., throughput and delay), which have been 
mostly considered as a given condition in previous literature.

The rest of the paper is organized as follows. In Section~\ref{sec:related_works}, we
briefly review the related works. The network model and detailed process of data
collection and dissemination based on network coding operations are discussed in
Section~\ref{sec:NC_SN}. The network formation game for multicast flows and its
decomposition into link formation games for a unicast flow are proposed in
Section~\ref{sec:NFG}. Simulation results and numerical evaluations are presented in
Section~\ref{sec:simulation}, and conclusions are drawn in
Section~\ref{sec:conclusion}.

\section{Related Works}
\label{sec:related_works}

Before the  notion of network coding,  it was infeasible to achieve an upper bound of
multicast capacity by conventional store-and-forward (SF) relaying
architectures~\cite{PracticalNC03}.  The Steiner tree based topology design can achieve the upper bound of
multicast capacity, but solving the Steiner tree is an NP-hard
problem~\cite{jain2003}. 
In~\cite{Ahlswede2000}, 
it is first shown that network
coding can achieve the maximum throughput via the max-flow min-cut theorem, and 
it is further proved that linear network coding~\cite{Li2003} can achieve the upper bound of capacity. 
The optimal topology solution for a single source scenario is studied in~\cite{Cui2007}; however, in multi-source scenarios, which are frequently observed in sensor network scenarios, the max-flow min-cut bounds cannot
fully characterize the capacity region, and thus, only loose outer bounds~\cite{Yan2006} and sub-optimal solutions~\cite{Traskov2006} are
studied.


Network coding has been deployed in a variety of sensor network
scenarios~\cite{Prior2014, Greco2015,movassaghi2013}. For example, network coding can
improve the energy efficiency of a body area sensor network~\cite{movassaghi2013}. A
robust network coding protocol is proposed for smart grids to enhance the
reliability and speed of data gathering~\cite{Prior2014}. In~\cite{Greco2015}, a
mobile crowd-sensing scenario is considered for decentralized data collection, and
network coding is deployed for energy and spectrum efficiency.

Topology design in sensor networks has been studied in the context of self-organizing
networks~\cite{kim2006,sohrabi2000,Heo2005}: in~\cite{sohrabi2000}, protocols are
proposed for the self-organization of wireless sensor networks with a large number of
mainly static and highly energy constrained nodes; in~\cite{kim2006}, a
self-organizing routing protocol for mobile sensor nodes declares the membership of
a  cluster   as they move and confirms whether a mobile sensor node can communicate
with a specific cluster head within a time slot allocated in a time division multiple
access schedule; in~\cite{Heo2005}, distributed energy efficient deployment
algorithms are proposed 
for mobile sensors and intelligent devices that form an ambient intelligent
network.

Distributed decision making has been widely considered in the field of game theory
and there have been a large number of studies on network formation games not only
in economics but  also  in  engineering~\cite{song2015}. For application to wireless
sensor networks, game-theoretic distributed topology control for wireless
transmission power is proposed in sensor networks~\cite{ komali2008}. The purpose
of topology control is to assign per-node optimal transmission power such that the
resulting topology can guarantee target network connectivity. A similar study of a
topology control game in \cite{eidenbenz2006} aimed to choose the 
optimal power level for
network nodes in ad hoc networks to ensure the desired connectivity properties.
In~\cite{xu2016}, a dynamic topology control scheme that prolongs the lifetime of a
wireless sensor network is provided based on a non-cooperative game.

\begin{table*}[tb]
   \caption{Related studies  in network topology design}
\centering
\begin{tabular}{ |c || c | c| c | c| c | c  |  }
    \hline
 &\cite{kim2006,sohrabi2000}&\cite{komali2008,Heo2005,eidenbenz2006}&\cite{Traskov2006} &\cite{Cui2007}&\cite{xu2016}& This Paper\\  \hline \hline
Source  type&Multi-source&Multi-source&Multi-source&Single source&Multi-source&Multi-source\\  \hline
 Simultaneous 	&&&&&&\\
Multiple&Yes&	Yes	&Yes&	No	&Yes&	Yes\\
Source Support &&&&&&\\
\hline
 Flow type  &Unicast&Unicast&Unicast&Multicast&Multicast&Multicast\\  \hline
 Multiple Point&&&&&&\\ 
Destination   &	No&	No&	No&	Yes	&Yes&	Yes\\
Support &&&&&&\\  \hline
  Solution  type&Centralized&Distributed&Centralized&Distributed&Distributed&Distributed\\  
      \hline
      Network&&&&&&\\        
      Formation  &	High	&Low	&High&	Low&	Low&	Low \\ 
       Complexity &&&&&&\\  
      \hline
Relaying type     &SF&SF&Network coding&Network coding &SF&Network coding \\
  \hline
  \end{tabular}
  \label{tab:previous_works}
 \end{table*}
In Table~\ref{tab:previous_works}, 
several representative related works are classified in terms of source, flow, solution and relaying types. 
In contrast to \cite{kim2006,komali2008,Cui2007,Traskov2006,sohrabi2000,Heo2005,eidenbenz2006}, this paper includes the most generalized source and flow types, i.e., multi-source multicast flows. 
Compared to \cite{xu2016}, which also considers multi-source multicast flows, this paper explicitly considers the network coding function in the topology design problem such that the previously described throughput advantage of network coding in a multicast flow can be properly utilized.

\section{Network Coding Based Sensor Networks  }
\label{sec:NC_SN}

In this section, we describe our network model and network coding based packet dissemination in sensor networks. 
The goal of the network is to deliver all data collected by 
source nodes to their own destination nodes.
In Table~\ref{tab:notation}, a summary of frequently used notation is presented. 
\begin{table*}[bt]
 \caption{Summary of notations}
\centering
\begin{tabular}{ c c  |c c  }
    \hline
    Notation & Description &  Notation & Description \\
     \hline
$\mathcal G$ & a direct graph & $\mathcal V(\mathcal G)$&a set of node in $\mathcal G$ \\
$\mathcal E(\mathcal G)$& a set of directed links in $\mathcal G$ & $\mathcal L_n$ & a subgraph\\
$v_i$ & a node with index $i$ & $x_i$ & source data collected by $v_i$\\
$e_{ij}$ & a directed link from $v_i$ to $v_j$ & $\delta_{ij}$ & Euclidean distance between $v_i$ and $v_j$\\
$X_{ij}$ & data transmitted via $e_{ij}$ & $J_{\mathcal G}$ & network status of $\mathcal G$\\
$\Phi$ & network coding operation & $\mathbf G_{\mathcal G}(\mathbf D)$ & network formation game of $\mathcal G$ with destination set $\mathbf D$\\
$u_i$ &  utility function of player $v_i$ & $R_i$ & reward of player $v_i$\\
$\lambda_i$ & cost of player $v_i$ & $a_i$ & action of player $v_i$\\
$\Lambda$ & unit cost for link formation & $U$ & network utility \\
\hline
  \end{tabular}
  \label{tab:notation}
 \end{table*}

\subsection{Network Model and Inter-link Dependency Condition}
\label{subsec:network_model}

We consider a directed graph $\mathcal G$ 
with a set of nodes $\mathcal V(\mathcal G)$ and a set of directed links $\mathcal E(\mathcal
G)$\footnote{
If the considered network changes over time, the network can be modeled 
by a directed graph $\mathcal G_t$ 
with a set of nodes $\mathcal V(\mathcal G_t)$ and a set of directed links $\mathcal
E(\mathcal G_t)$ as a function of time slot $t$. 
However, our focus in this paper
is on the distributed solution for topology
formation at {\emph {each time slot}}, so  
we can omit the subscript $t$ 
in the rest of this paper 
without the loss of generality.}.
An element $v_i \in \mathcal V(\mathcal G)$ can be a source
node and/or a destination node (i.e., data sink)\footnote{
A destination node can be a source node by itself, but, it is not allowed that a
destination node can be a source node for other nodes. 
For example,
a destination node that is connected to 
a server can be a source node by itself (i.e., the server can
have data obtained from the destination node).}, 
and  $x_i$ denotes source data collected by node $v_i$. 
The number of nodes
in $\mathcal V(\mathcal G)$ is denoted by 
$|\mathcal V(\mathcal G)|=N_V$, where
$| \cdot |$ denotes the cardinality of a set. 
Every node in $\mathcal V(\mathcal G)$ plays the role of source by collecting data
(i.e., sensing) and simultaneously plays the role of relay by disseminating the
collected data. 
The set of destination nodes for 
$v_i$ is denoted by $\mathcal D_i  \subseteq \mathcal V(\mathcal G)$, and 
$\mathcal D = \{i | v_i \in \{\mathcal D_1, \ldots, \mathcal D_{N_V}\} \}$ represents 
an index set of destinations for all network nodes. 
The set of destinations is constant and not changed over time.

%
%
A directed link from $v_i$ to $v_j$ is denoted by 
$e_{ij} \in \{0,1 \}$, where 
the {\it active} link ($e_{ij}=1$)  
can deliver data and $X_{ij}$ denotes the data  transmitted via $e_{ij}$.
Otherwise,  
the link is {\it inactive}, and $e_{ij}=0$. 
The link $e_{ij}$ has a direction, i.e., $v_i$ is the tail and $v_j$ is the
head, so that $e_{ij} \neq e_{ji}$. 
In this paper, $e_{ij}$ is called an {\it incoming} link of $v_j$ or an {\it outgoing}
link of $v_i$. 
Note that both $e_{ij}$ and $e_{ji}$ can be simultaneously active, i.e.,
$e_{ij}=1$ and $e_{ji}=1$ \cite{katti2007joint, baik2008network, li2010relay}\footnote{There has been extensive research on the 
full-duplex mode in network coding based relay communication allowing two-way
communications~\cite{katti2007joint, baik2008network, li2010relay}.}. 
$\mathcal E(\mathcal G)$ includes only active links so that 
$|\mathcal E(\mathcal G)|$ is the number of active links in $\mathcal G$.

Let $\delta_{ij}$ be the Euclidean distance between $v_i$ and $v_j$~\footnote{In order to estimate distance between nodes, several works such as \cite{Xu2011node, Sunyong2017} can be adopted in our solution.}. 
As special
cases, we define  $\delta_{ij}=0$ if $v_i = v_j$ and 
$\delta_{ij} = \infty$ if $v_i$ and $v_j$ are not able to make a link between them. 
The set of neighbor nodes of $v_i$ is denoted by $\mathcal H_i = \{ v_j |0< \delta_{ij} \le \Delta\}$, 
where $\Delta$ denotes a connection boundary\footnote{This connection boundary in terms of the Euclidean distance is motivated by realistic communication problems such as limited power budget of a node and interference among wireless nodes.}. 
If $\delta_{ij}>\Delta$, then a link between $v_i$ and $v_j$ cannot be formed, i.e.,
$e_{ij} = e_{ji} = 0$. 
$\tilde
{\mathcal  H}_i^{in} =\{v_j| e_{ji} = 1, \forall v_j \in \mathcal H_i \} $ and 
$ \tilde {\mathcal  H}_i^{out} =\{v_j| e_{ij} = 1, \forall v_j \in \mathcal
H_i \} $ denote a set of 
neighbor nodes of $v_i$ with active incoming and outgoing links, respectively. 
An illustrative example of a sensor network topology with seven nodes 
is shown in Figure~\ref{fig:sensor_NW}. 

For simplicity, we assume that the capacity of  link $e_{ij}$ is one packet per unit
of time slot, i.e., a node can transmit only one packet in each time
slot~\cite{Katti2006, nad2004, topakkaya2011}\footnote{Note that this
assumption can be easily generalized to $\Omega$ bits per unit slot for a constant $\Omega$.}. 
We also assume that each sensor always has a packet to send at each time slot (e.g., a sensor generates a packet for every time slot). Hence, if it builds an outgoing link,  a packet to be transmitted always exists through the outgoing link.  
If a node has multiple outgoing links, it multicasts a single packet per time slot
through all the outgoing links so that all outgoing links from one node deliver the
same packet at the same time slot. If a node has multiple incoming links, it can
receive multiple individual packets by deploying, for example,  multipacket reception
techniques~\cite{Cloud2012, mirrezaei2014}.  
Even though a node can receive multiple packets at a single time slot, under
the conventional SF relaying architecture, a node cannot transmit more
than one packet at a time because of the link capacity constraints. 
Hence, a node becomes a bottleneck of flows when it receives a larger number of
packets than its output link capacity (i.e., one packet per unit of time slot), which
is referred to as the \emph{bottleneck problem}. 

To prevent the bottleneck problem, a node may restrict the number of incoming packets to 
not  exceed the link capacity, and such a constraint can be feasible to restrict the
number of incoming links to at most one, 
%
%
i.e., 
\begin{equation}
  \sum_{\forall v_i \in \tilde{\mathcal H}_j^{in}}e_{ij} \le 1 \,\, \textrm{for all} \,\, v_j \in \mathcal V(\mathcal G)
\label{eqn:inter-link}
\end{equation} 
which is referred to as the {\it inter-link dependency condition} in this paper. 
\begin{figure}[tb]
\centering
\includegraphics[width = 7cm]{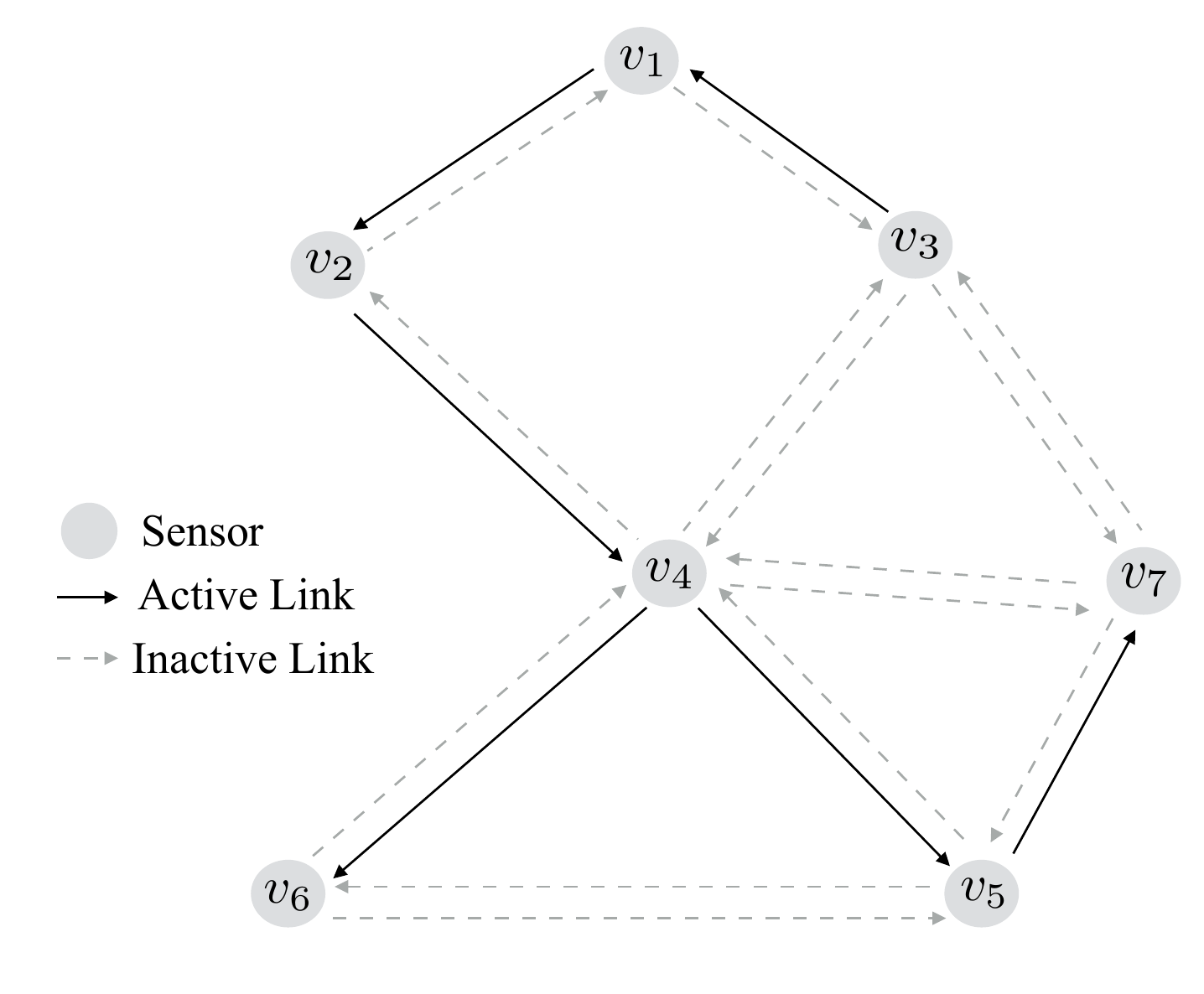}
\caption{ An illustrative example of sensor network topology $\mathcal G$, where $\mathcal
V(\mathcal G) = \{ v_1, \ldots, v_7 \}$ and $\mathcal E(\mathcal G) =
\{e_{12}, e_{24},e_{46},e_{45},e_{57},e_{31}  \}$. 
For node $v_4$, 
a set of neighbor nodes with active incoming and outgoing links are $ \tilde {\mathcal  H}_4^{in}
=\{v_2 \}$ and $\tilde {\mathcal  H}_4^{out} =\{v_5, v_6 \}$, respectively. 
All nodes in this example have at most one incoming link. 
}
\label{fig:sensor_NW}
\end{figure}

\subsection{Elimination of Inter-link Dependency by Network Coding}


The {\it network status} of ${\mathcal G}$ is defined as the set of data 
included in $\mathcal E(\mathcal G)$ with the inter-link dependency condition in \eqref{eqn:inter-link}, which is denoted by 
$J_{\mathcal G}$ and expressed as \eqref{eqn:network_status}. 
\begin{figure*}
\begin{align}
J_{\mathcal G}= \left< \{X_{ij}\cdot e_{ij}| \forall v_i, v_j \in \mathcal V(\mathcal G), i \ne j \}, \left( \sum_{\forall v_i \in \tilde{\mathcal H}_j^{in}}e_{ij} \le 1, \forall v_j \in \mathcal V(\mathcal G) \right) \right>. 
\label{eqn:network_status}
\end{align}
\hrulefill
\end{figure*}
If $e_{ij} \in \mathcal E(\mathcal G)$, then $X_{ij}\cdot
e_{ij} = X_{ij}$. Otherwise, $X_{ij}\cdot e_{ij} = 0$. Hence, $\{X_{ij}\cdot
e_{ij}| \forall v_i, v_j \in \mathcal V(\mathcal G), i \ne j \}$ in
\eqref{eqn:network_status} represents a set of
data included in $\mathcal E(\mathcal G)$ for the link dependent data $X_{ij}$. 



If network coding is deployed in $\mathcal G$, the resulting network status is
denoted by  
$\Phi(J_{\mathcal G})$ and is expressed as 
\begin{align}
\Phi(J_{\mathcal G}  ) = \left< 
 \{p_j\cdot e_{ij}| \forall v_i, v_j \in \mathcal V(\mathcal G), i \ne j \}
\right>
\label{eqn:Phi}
\end{align}
where $p_j$ denotes a network coded packet that flows into $v_j$. 
The network coded packet $p_j= [C_{1j}, \ldots, C_{N_Vj}, y_j]$ is a  
vector of the global coding coefficients $[C_{1j}, \ldots, C_{N_Vj}]^T$ as the header and
$y_j$ as the payload, which is 
constructed as 
\begin{align}
y_j = \sum_{k=1}^{N_V}\bigoplus  \left( C_{kj} \otimes x_k \right)
\label{eqn:y}
\end{align}
where $\oplus$ and $\otimes$ denote the addition and multiplication operations in
a Galois field (GF),
respectively. 
Hence, 
the network coding function $\Phi$ combines all packets that flow into 
$v_j$ and generates a single packet $p_j$.
This operation allows a node to take multiple incoming links and prevents   
the bottleneck problem, so that the inter-link dependency
in~\eqref{eqn:network_status} can be eliminated as in~\eqref{eqn:Phi}.

The elimination of the inter-link dependency 
through the network coding function $\Phi$
can be interpreted as follows. 
The network coding function $\Phi$ converts the link dependent data $X_{ij}$ into 
the link independent
data $p_j$. Hence, 
$v_j$ 
receives a single of packet $p_j$ 
from all incoming links no matter how
many incoming links are formed\footnote{
Note that \eqref{eqn:Phi} does
not mean that the packet $p_j$ is coming to the node $v_j$ for all incoming links
$e_{ij}$. Actual packets in $e_{ij}$ are not all the same as $p_j$. All incoming packets are combined
into $p_j$ based on  
the network coding operation in~\eqref{eqn:y} 
and it can be \emph{interpreted} as the node receives $p_j$ from previous
nodes.  }. 
%
Examples of this interpretation are illustrated in Figure~\ref{fig:NC_function}, and more details of network coding operations are provided in~\ref{sec:NC}.

\begin{figure*}[tb]
\centering
\includegraphics[width = 12cm]{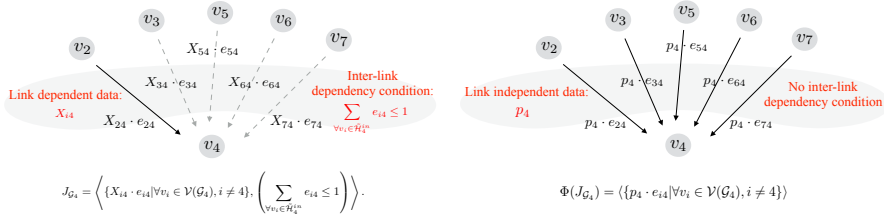}
\caption{
Illustrative examples of the network status of $\mathcal G_4$  without (left) and with (right) 
 network coding function $\Phi$. Note that $\mathcal G_4$ has $v_4$ and its neighbor nodes, which are presented in Figure~\ref{fig:sensor_NW}, i.e.,  $\mathcal G_4 \subset \mathcal G$ and $\mathcal V(\mathcal G_4) = \{v_2, v_3, v_4,v_5, v_6, v_7 \}$. 
(left) Without network coding function $\Phi$ in $v_4$, the inter-link dependency condition must be satisfied and the data are link-dependent. (right) With network coding function $\Phi$ in $v_4$, multiple links can be established simultaneously (i.e., no inter-link dependency condition is considered), and $\Phi$ allows data to be interpreted as link-independent. 
}
\label{fig:NC_function}
\end{figure*}

\section{Distributed Topology Formation Based on Game-theoretic Approaches}
\label{sec:NFG}

In this section, we propose a distributed topology formation strategy 
in a sensor network
with  
multi-source multicast flows. 
We formulate the problem of how to make decisions on 
link connections 
between nodes in the considered network 
as a game, referred to as the network formation game. Then, we show that the network
formation game can be decomposed into link formation games, which enables 
each node to decide which links are active or inactive. Therefore, this 
eventually 
leads to a distributed
solution. 

\subsection{Network Formation Game}
\label{subsec:NFG}

Given 
a set of nodes $\mathcal
V(\mathcal G)$ and 
a destination index matrix 
$\mathbf D = [\mathbf D_1^T, \ldots,\mathbf
D_{N_V}^T]$, where $\mathbf D_i  = [ j | j \in \mathcal D_i ]$ is an index vector
for destination nodes of 
$v_i$,  
a strategic form of the network formation game can be expressed as
\begin{equation}
\notag \mathbf G_{\mathcal G}(\mathbf D) = \langle \mathcal V(\mathcal G),  (\mathbf
a_i)_{v_i \in \mathcal V(\mathcal G)}, ({u_i(a_i, a_{-i},\mathbf D_i)})_{v_i \in \mathcal
V(\mathcal G)} \rangle, 
\end{equation}
where 
$\mathcal V(\mathcal G) $, 
$\mathbf a_{i} = {\bigtimes}_{\forall v_j \in \mathcal H_i} e_{ij} =\bigtimes_{\forall v_j \in
\mathcal H_i}  \{0, 1\}$, and ${u_i(a_i, a_{-i},\mathbf D_i)}$ denote a {finite }set of players, a {finite} set of
actions for player ${v_i}$, and the utility function of player ${v_i}$,
respectively. $\bigtimes$ denotes the Cartesian product. 


A network node $v_i \in \mathcal V(\mathcal
G)$ is a player in the network formation game, which 
makes decisions about link formation with its neighbor 
nodes $\forall v_j \in \mathcal H_i$. The action of $v_i$ is 
denoted by $a_i =(e_{ij})_{\forall v_j \in \mathcal
H_i}\in \mathbf a_i$. 
The utility of $v_i$  is defined as a quasi-linear utility function, expressed as
\begin{align}
u_i(a_i, a_{-i}, \mathbf D_i) &= R_i(a_i, \mathbf D_i) - \lambda_i(a_i, a_{-i})
\end{align}
where $a_{-i}$ denotes a set of actions taken by players other than $v_i$  in $\mathcal V$. 
Given destination nodes, the utility of a player can be determined 
by the reward $R_i(a_i, \mathbf D_i)$ and cost $\lambda_i(a_i, a_{-i})$ associated with its own and others' actions $(a_i, a_{-i})$. More details regarding reward and cost are provided  below.  

{\bf Reward $R_i(a_i, \mathbf D_i)$:} The reward 
represents the distance reduction toward the destination nodes $ \mathbf D_i$ by taking the action $a_i$ at $v_i$, defined as 
\begin{align}
R_i(a_i, \mathbf D_i)= \sum_{v_j \in \mathcal H_i}  E_{ij}(a_i) \left( f\left(\boldsymbol\delta_{j\mathbf D_i}\right) -f\left(\boldsymbol\delta_{i\mathbf D_i}\right) \right)
\label{eqn:reward}
\end{align}
where $E_{ij}(a_i)$ indicates whether  the link $e_{ij}$ is active or
not for an 
action $a_i$, i.e., 
\begin{equation*}
E_{ij}(a_i) = e_{ij} \in \{0, 1\}.
\end{equation*}
For example, if the action $a_i$ makes the link $e_{ij}$ active,  $E_{ij}(a_i) =1$.
Otherwise, 
$E_{ij}(a_i) =0$.  
In \eqref{eqn:reward}, $\boldsymbol\delta_{i\mathbf D_i} = (\delta_{ij})_{\forall j \in \mathbf D_i}$ 
denotes a vector of distances from $v_i$ to destinations $v_j$ for all $j \in \mathbf D_i$, and $f: \mathbb R^{|\mathbf D_i| \times 1}
\rightarrow \mathbb R$ denotes an inversely proportional function such that  $f(\boldsymbol\delta_{i\mathbf D_i})$ is inversely proportional to 
$\boldsymbol\delta_{i\mathbf D_i}$\footnote{
More details about
$f(\boldsymbol\delta_{i\mathbf D_i})$ are discussed in Section~\ref{sec:simulation} with an example (e.g.,
\eqref{eqn:delta_function}-\eqref{eqn:loc}).}. 

The reward function is designed such that a higher reward is given to the node if it 
builds links closer toward  the destination. 
Hence, if $v_j$ is located closer to destination $\mathbf D_i$ than $v_i$,
i.e., $\boldsymbol\delta_{j\mathbf D_i}< \boldsymbol\delta_{i\mathbf D_i}$, then
$f(\boldsymbol\delta_{j\mathbf D_i}) > f(\boldsymbol\delta_{i\mathbf D_i})$, as 
$f$ is an 
inversely proportional function.
This leads to $f\left(\boldsymbol\delta_{j\mathbf D_i}\right)
-f\left(\boldsymbol\delta_{i\mathbf D_i}\right)>0$, which improves 
rewards on $R_i(a_i, \mathbf D_i)$ 
in \eqref{eqn:reward}, if $E_{ij}(a_i)=1$. 
Therefore,  $v_i$ takes the action $a_i$ that builds the link to $v_j$ (i.e.,
$e_{ij}=1$) in order to maximize its own reward. 

The reward function furthermore takes the importance of node locations into
account by assigning higher rewards to closer nodes than the nodes far from the 
destination\footnote{In Figure~\ref{fig:location_impact}, it is confirmed that the
proposed reward function encourages a node close to the destination 
to build a link by
giving higher rewards than a node far from the destination.}.  
Consider two node pairs $(v_i, v_j)$ and $(v_i', v_j')$, for example, 
where both $v_i$
and $v_i'$ have the same destination $\mathbf D$ for simplicity. 
Suppose two node
pairs have
the same distance between them, i.e., 
\begin{equation}
\delta_{i\mathbf D}-\delta_{j\mathbf D} =
\delta_{i'\mathbf D}-\delta_{j'\mathbf D} 
\end{equation}
but the pair $(v_i, v_j)$ is located 
closer to the destination than $(v_i', v_j')$, i.e., 
\begin{equation}
\delta_{i\mathbf D}
<\delta_{i'\mathbf D},  \delta_{j\mathbf D} <\delta_{j'\mathbf D}.
\end{equation} 
Then, 
\begin{align}
f(\delta_{j\mathbf D}) -f\left(\delta_{i\mathbf D}\right) >  f(\delta_{j'\mathbf D}) -f\left(\delta_{i'\mathbf D}\right). 
\end{align}
because of the inversely proportional function $f$. 
Therefore, it is confirmed that 
the definition of the reward function in \eqref{eqn:reward} assigns higher rewards to the
node closer 
to the destination (i.e., $v_i$), 
even though the distance
reduction by making a link is the same. 
Such reward design improves the source-to-destination
connectivity,
because the link formation at a node closer to the destination has critical impact on
network connectivity (i.e., successful connection from sources to destinations).

{\bf Cost $\lambda_i(a_i, a_{-i})$:} 
Given the actions $(a_i, a_{-i})$ selected by players, the cost 
is defined as
\begin{align}
\lambda_i(a_i, a_{-i})=\sum_{v_j \in \mathcal H_i} \left( \frac{E_{ij}(a_i)}{E_{ij}(a_i)+{E_{ji}(a_{j})}} \times \Lambda \right)
\label{eqn:cost} 
\end{align}
where $\Lambda$ is a unit cost for link formation. 
We define ${0}/{0} =0$ \cite{schal1994quadratic,huang1998nonparametric,pele2010quadratic, filipovic2013density}. 
The cost $\lambda_i(a_i, a_{-i})$ in \eqref{eqn:cost}
represents the total payment required for all outgoing links
that $v_i$ makes.
This can be considered as the 
penalty incurred by the
message exchanges and time consumption for negotiation (i.e., required process for link formation between
nodes~\cite{zhang2011game}), or the penalty for causing interference to neighbor nodes\footnote{These are essential for constructing a link between nodes in the
considered network setting as nodes share their services and resources without any
central administration or coordination~\cite{grigoras2007cost}. If the link is bidirectional, i.e.,
both nodes would like to make a connection between them, they can  share 
the cost associated with control message exchanges and the required time consumption for negotiation~\cite{anshelevich2008price}, leading to cost
reduction. However, if the link
is unidirectional, i.e., only one of the nodes would like to solely make a connection to other node, the node is responsible for the cost associated with the message exchanges
and negotiation time.}. 
For a link 
between $v_i$ and $v_j$, 
if either $v_i$ or $v_j$ decides to build the outgoing link, 
the unit cost for link formation $\Lambda$ is solely charged to the node that builds the link. If both
nodes decide to build the link, the link formation cost is charged 
to them equally\footnote{
The equal-division mechanism was first proposed in~\cite{herzog1997} and it has
been extensively deployed in the network formation cost (e.g.,~\cite{  arcaute2009,feigenbaum2001}). 
}. 

The solution to the network formation game $\mathbf G_{\mathcal G}(\mathbf D)$ is 
the set of
actions $(a_i^*, a_{-i}^*)$, which is optimally taken by each player, determining $\mathcal
E(\mathcal G)$ and the corresponding 
network topology. 
While the proposed solution to 
the network formation game $\mathbf
G_{\mathcal G}(\mathbf D)$ can be obtained in a distributed way, the computational
complexity required to find the solution can be significantly increased, especially as
$\mathcal G$ becomes large (i.e., the network size grows). 
Hence, 
we show that the network formation game can be decomposed into
several link formation games 
by deploying network coding, which enables the solution to be found with
significantly lower complexity in the next. 


%
%
%


\subsection{Network Coding Based Game Decomposition}
\label{subsec:NC_game}

We define \emph {edge-disjoint subgraphs} of $\mathcal G$ as a set of subgraphs
whose 
links are disjointed and the union of them is $\mathcal G$\footnote{
There can be
maximum $N_V \choose 2$ edge-disjoint subgraphs in $\mathcal G$ as each link with two
nodes becomes a subgraph of $\mathcal G$. Hence, it is always possible to decompose
$\mathcal G$ into edge-disjoint subgraphs. 
}. 
Specifically, for $N$ edge-disjoint subgraphs $\mathcal L_1, \ldots, \mathcal L_N $
of $\mathcal G$ are 
\begin{itemize}
\item $\mathcal V(\mathcal L_n) \subseteq \mathcal V(\mathcal G)$, 
\item $\mathcal E(\mathcal L_n)
\subseteq \mathcal E(\mathcal G)$, 
\item $\bigcup_{n=1}^{N} \mathcal E(\mathcal L_n) =
\mathcal E(\mathcal G)$ and 
\item $\mathcal E(\mathcal L_n) \cap \mathcal E(\mathcal L_m)
= \emptyset$ for $1\le n, m \le N, n \neq m$.
\end{itemize}
which means that if a graph is decomposed into multiple edge-disjoint subgraphs, the vertices are allowed to be shared across subgraphs, but edges are not~\cite{Agarwal2009}\footnote{By the definition of edge-disjoint subgraph, a vertex can be shared in multiple
subgraphs such that data can be conveyed from one subgraph to another subgraph via the 
shared vertex. This allows data packets in the source nodes to be delivered to 
the destination nodes.}. 
The network formation game for a subgraph $\mathcal L_n$ with $\mathbf D$ can be expressed as 
\begin{equation*}
 \mathbf G_{\mathcal L_n}(\mathbf D)= \langle \mathcal V(\mathcal L_n),  (\mathbf a_i)_{v_i
\in \mathcal V(\mathcal L_n)}, (u_i(a_i, a_{-i}, \mathbf D_i))_{v_i \in \mathcal V(\mathcal L_n)}
\rangle
\end{equation*}
and the network
status for the resulting network from $\mathbf G_{\mathcal L_n}(\mathbf D)$ is
denoted by $J_{\mathbf G_{\mathcal L_n}(\mathbf D)}$, as defined in
\eqref{eqn:network_status}.

Since the actions simultaneously determined by the players in a game are the union of
the links that are active and inactive in the network, the 
product operation for games can be considered as 
the union of their network status, expressed as 
\begin{align}
\prod_{n=1}^{N} \mathbf G_{\mathcal L_n}(\mathbf D) \triangleq \bigcup_{n=1}^{N}J_{\mathbf
G_{\mathcal L_n}(\mathbf D)}.
\label{eqn:sum}
\end{align}
In Theorem~\ref{lem:indep}, we show that network coding can decompose 
the network formation game 
$\mathbf G_{\mathcal G}(\mathbf D)$ into independent games $\mathbf G_{\mathcal
L_n}(\mathbf D)$ for subgraph  $\mathcal L_n$
 for $1\le n \le N$.


\begin{theorem} 
The network formation game for a graph can be decomposed by network coding into
independent games for edge-disjoint subgraphs. 
\label{lem:indep}
\end{theorem}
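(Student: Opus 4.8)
The plan is to show that, once the network coding function $\Phi$ is deployed, the quasi-linear utility $u_i = R_i - \lambda_i$ of every player splits into a sum of terms, each attributable to a single edge-disjoint subgraph, so that a best response in $\mathbf G_{\mathcal G}(\mathbf D)$ is nothing but the concatenation of independent best responses in the games $\mathbf G_{\mathcal L_n}(\mathbf D)$. Concretely, I would first fix an edge-disjoint decomposition $\mathcal L_1,\dots,\mathcal L_N$ of $\mathcal G$ (the natural choice being the one whose subgraphs are node pairs together with the at most two directed links between them, which is exactly what produces the $2$-player link formation games), and for each player $v_i$ write its action as $a_i=(a_i^{(1)},\dots,a_i^{(N)})$, where $a_i^{(n)}$ collects the coordinates $e_{ij}$ whose link belongs to $\mathcal E(\mathcal L_n)$; edge-disjointness guarantees this is a genuine partition of the coordinates of $a_i$.

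Next I would invoke the elimination of the inter-link dependency. Without network coding the network status \eqref{eqn:network_status} carries the constraint $\sum_{v_i\in\tilde{\mathcal H}_j^{in}}e_{ij}\le 1$, which couples incoming links of $v_j$ that may lie in different subgraphs and therefore blocks any per-subgraph treatment. By \eqref{eqn:Phi}, however, $\Phi(J_{\mathcal G})$ carries no such constraint: $v_j$ may hold arbitrarily many incoming links, so the feasible action set is the full product $\bigtimes_{v_j\in\mathcal H_i}\{0,1\}$ and factorizes over the subgraphs. It then remains to see that the payoff also factorizes. The reward \eqref{eqn:reward} is a sum over $v_j\in\mathcal H_i$ of terms $E_{ij}(a_i)\bigl(f(\boldsymbol\delta_{j\mathbf D_i})-f(\boldsymbol\delta_{i\mathbf D_i})\bigr)$, each depending only on the single indicator $E_{ij}(a_i)=e_{ij}$ and on the fixed distances; grouping the terms by the subgraph to which $e_{ij}$ belongs gives $R_i(a_i,\mathbf D_i)=\sum_{n=1}^N R_i(a_i^{(n)},\mathbf D_i)$. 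The cost \eqref{eqn:cost} is likewise a sum over $v_j\in\mathcal H_i$ of terms $\tfrac{E_{ij}(a_i)}{E_{ij}(a_i)+E_{ji}(a_j)}\Lambda$, each depending only on the pair of links $e_{ij},e_{ji}$, i.e.\ only on the two endpoints $v_i,v_j$; since the chosen decomposition keeps $e_{ij}$ and $e_{ji}$ in the same subgraph, this gives $\lambda_i(a_i,a_{-i})=\sum_{n=1}^N\lambda_i(a_i^{(n)},a_{-i}^{(n)})$ with $a_{-i}^{(n)}$ involving only players of $\mathcal V(\mathcal L_n)$. Hence $u_i(a_i,a_{-i},\mathbf D_i)=\sum_{n=1}^N u_i(a_i^{(n)},a_{-i}^{(n)},\mathbf D_i)$.

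From this additive separation I would conclude the theorem in two steps. First, because the $n$-th summand of $u_i$ depends only on the actions played inside $\mathcal L_n$, no strategic choice made in $\mathbf G_{\mathcal L_m}(\mathbf D)$ changes any payoff in $\mathbf G_{\mathcal L_n}(\mathbf D)$ for $m\ne n$, which is precisely the statement that the games $\mathbf G_{\mathcal L_1}(\mathbf D),\dots,\mathbf G_{\mathcal L_N}(\mathbf D)$ are independent; moreover maximizing $u_i$ over $a_i$ is equivalent to maximizing each $u_i(\cdot^{(n)},a_{-i}^{(n)},\mathbf D_i)$ separately, so a profile solves $\mathbf G_{\mathcal G}(\mathbf D)$ if and only if its restriction to each $\mathcal L_n$ solves $\mathbf G_{\mathcal L_n}(\mathbf D)$. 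Second, applying the product/union identity \eqref{eqn:sum} together with $\bigcup_n\mathcal E(\mathcal L_n)=\mathcal E(\mathcal G)$ and $\mathcal E(\mathcal L_n)\cap\mathcal E(\mathcal L_m)=\emptyset$ shows $\prod_{n=1}^N\mathbf G_{\mathcal L_n}(\mathbf D)=\bigcup_{n=1}^N J_{\mathbf G_{\mathcal L_n}(\mathbf D)}=J_{\mathbf G_{\mathcal G}(\mathbf D)}$, i.e.\ recomposing the independent games reproduces the original one.

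The step I expect to be the real obstacle --- and the only place where network coding is genuinely used --- is the claim that the feasible action sets factorize: this is exactly the passage from \eqref{eqn:network_status} to \eqref{eqn:Phi}, and it must be argued carefully because the inter-link dependency constraint in \eqref{eqn:inter-link} is the unique source of coupling between edges that do not share a tail. A subsidiary but necessary point of care is the bidirectional-link bookkeeping in the cost term: since $\lambda_i$ for $e_{ij}$ references $E_{ji}(a_j)$, the decomposition must not split $e_{ij}$ from $e_{ji}$; choosing the node-pair subgraphs (hence the $2$-player link formation games) resolves this cleanly, and any coarser edge-disjoint decomposition that respects reverse edges works verbatim.
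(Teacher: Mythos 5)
Your proof is correct, but it takes a genuinely different route from the paper's. The paper never touches the utility functions in this proof: it defines the product of games as the union of their network statuses via \eqref{eqn:sum}, writes $\mathbf G_{\mathcal G}(\mathbf D)$ as a chain of conditional games whose statuses carry cumulative inter-link dependency constraints, and then observes that applying $\Phi$ deletes those constraints term by term, leaving exactly $\bigcup_n \Phi(J_{\mathbf G_{\mathcal L_n}(\mathbf D)}) = \prod_n \Phi(\mathbf G_{\mathcal L_n}(\mathbf D))$. That argument is purely a bookkeeping identity on the status objects. You instead prove strategic independence directly: factorization of the feasible action sets (which, as in the paper, is where network coding enters --- removal of \eqref{eqn:inter-link} is the only coupling between edges not sharing a tail) plus additive separability of $R_i$ and $\lambda_i$ across subgraphs, so that best responses and hence equilibria decompose. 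Your route buys something the paper's does not: it makes explicit that the cost term $E_{ij}(a_i)/(E_{ij}(a_i)+E_{ji}(a_j))$ forces the decomposition to keep $e_{ij}$ and $e_{ji}$ in the same subgraph, a hypothesis the status-level proof never confronts but which the paper implicitly satisfies by choosing node-pair subgraphs in Algorithm~\ref{alg:bigpicture}; it also justifies the word ``independent'' in the game-theoretic sense rather than only as an identity of link sets. The paper's route, for its part, is the one that composes directly with Theorem~\ref{th:unicast}, since that theorem reuses the same status-union formalism. If you wanted your version to slot into the paper verbatim, you would add one line recovering \eqref{eqn:lem_prf} from your separability statement via \eqref{eqn:sum}, which you already sketch in your final recomposition step.
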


\begin{proof}
To show that  
the network formation game for a graph can be decomposed by network coding into
independent games for edge-disjoint subgraphs, 
it should be proved that 
\begin{align}
\Phi (  {\mathbf G_{\mathcal G}}(\mathbf D) )
=\prod_{n=1}^{N }\Phi( {\mathbf G_{\mathcal L_n}} (\mathbf D))
\label{eqn:lem_prf}
\end{align}
where  $\Phi$ is the network coding function defined in \eqref{eqn:Phi}.

The network formation game for a graph $\mathcal G$ is the joint game of
edge-disjoint subgraphs ${\mathcal L_1, \ldots, \mathcal L_N}$, which can be played
as sequential conditional games based on a chain rule as in \eqref{eqn:th-game}--\eqref{eqn:lem_proof_1}. 
\begin{figure*}[bt]
\begin{align}
\mathbf G_{\mathcal G} (\mathbf D) &=  {\mathbf G_{\mathcal L_1, \mathcal L_2, \cdots, \mathcal L_N }} (\mathbf D)\notag\\
&= {\mathbf G_{\mathcal L_1}} (\mathbf D) \cdot {\mathbf G_{\mathcal L_2 | \mathcal L_1 }} (\mathbf D) \cdots {\mathbf G_{\mathcal L_N | \mathcal L_1, \mathcal L_2, \cdots, \mathcal L_{N-1} }} (\mathbf D)\label{eqn:th-game}\\
&= J_{\mathbf G_{\mathcal L_1}(\mathbf D)} \cup J_{\mathbf G_{\mathcal L_2 | \mathcal L_1 } (\mathbf D)} \cup \cdots \cup J_{\mathbf G_{\mathcal L_N | \mathcal L_1, \mathcal L_2, \cdots, \mathcal L_{N-1} } (\mathbf D)}\label{eqn:th-status}\\
&= \left< \{X_{ij}\cdot e_{ij}| \forall v_i, v_j \in \mathcal V(\mathcal L_1), i \ne j \}, \left( \sum_{\forall v_i \in \tilde{\mathcal H}_j^{in}}e_{ij} \le 1, \forall v_j \in \mathcal V(\mathcal L_1) \right) \right>\notag\\
& \bigcup \left< \{X_{ij}\cdot e_{ij}| \forall v_i, v_j \in \mathcal V(\mathcal L_2), i \ne j \}, \left( \sum_{\forall v_i \in \tilde{\mathcal H}_j^{in}}e_{ij} \le 1, \forall v_j \in \mathcal V(\mathcal L_1)\cup  \mathcal V(\mathcal L_2) \right) \right>\notag\\
& \bigcup \cdots \bigcup
\left< \{X_{ij}\cdot e_{ij}| \forall v_i, v_j \in \mathcal V(\mathcal L_N), i \ne j \}, \left( \sum_{\forall v_i \in \tilde{\mathcal H}_j^{in}}e_{ij} \le 1, \forall v_j \in \bigcup_{n=1}^{N}\mathcal V(\mathcal L_n) \right) \right>
\label{eqn:lem_proof_1}
\end{align}
\hrulefill
\end{figure*}
Here, the equality between \eqref{eqn:th-game} and \eqref{eqn:th-status} is based
on \eqref{eqn:sum}, and \eqref{eqn:lem_proof_1} is based on the definition of the network
status in \eqref{eqn:network_status}. 
Note that the network formation game expressed in 
\eqref{eqn:lem_proof_1} still includes the inter-link dependency.

By applying the network coding function $\Phi$ in 
\eqref{eqn:lem_proof_1}, we have 
\begin{align}
\Phi (&\mathbf G_{\mathcal G} (\mathbf D) ) \\
&= \left<  \{p_j\cdot e_{ij}| \forall v_i, v_j \in \mathcal V(\mathcal L_1), i \ne j \} \right>\notag\\
&\quad \cup \left<  \{p_j\cdot e_{ij}| \forall v_i, v_j \in \mathcal V(\mathcal L_2), i \ne j \} \right>\notag\\ 
&\quad \cup \cdots \cup 
\left<  \{p_j\cdot e_{ij}| \forall v_i, v_j \in \mathcal V(\mathcal L_N), i \ne j \} \right>\label{eqn:nc-status}\\
&= \Phi(J_{\mathcal L_1(\mathbf D)}) \cup  \Phi(J_{\mathcal L_2(\mathbf D)})\cup \cdots \cup \Phi(J_{\mathcal L_N(\mathbf D)})\label{eqn:nc-J}\\
&= \Phi({\mathbf G_{\mathcal L_1}} (\mathbf D))\cdot\Phi({\mathbf G_{\mathcal L_2}} (\mathbf D))\cdots  \Phi({\mathbf G_{\mathcal L_N}} (\mathbf D))\label{eqn:nc-times}\\
&=\prod_{n=1}^{N }\Phi( {\mathbf G_{\mathcal L_n}} (\mathbf D)) \notag
\end{align}
where the equality between \eqref{eqn:nc-status} and \eqref{eqn:nc-J} is based on \eqref{eqn:Phi}, and \eqref{eqn:nc-times} is based on \eqref{eqn:sum}. Therefore, the network formation game for a graph can be decomposed by network coding into
independent games for edge-disjoint subgraphs, which completes the proof. 
\end{proof}

Importantly, Theorem~\ref{lem:indep} implies that $\mathbf G_{\mathcal G}(\mathbf D)$
with multiple destinations in $\mathbf D$ 
can be further decomposed into 
$\mathbf G_{\mathcal L_n} (d)$ for $1 \le n \le N$ and $d \in \mathcal
D$ with a single destination node $v_d$, which is shown in Theorem~\ref{th:unicast}.

In order to prove this, 
we define 
a virtual subnode of $v_i$ that has flows to be delivered to 
destination $v_d$ as 
$v_i(d)$. 
By definition, there are $|\mathcal D|$ virtual subnodes 
in $v_i$. Similarly, a virtual sublink of $e_{ij}$ with destination $v_d$ is 
denoted by $e_{ij}(d)$, 
and $e_{ij}$ includes $|\mathcal D|$ sublinks. 
Then,  a virtual subgraph for destination $v_d$ can be defined as $\mathcal L_n (d)$,
which satisfies
\begin{itemize}
  \item $\mathcal V(\mathcal L_n (d)) \subseteq \mathcal V(\mathcal L_n
(\mathbf D))$, 
\item $\mathcal E(\mathcal L_n (d)) \subseteq \mathcal E(\mathcal L_n
  (\mathbf D))$, 
\item  $\bigcup_{d \in \mathcal D} \mathcal E(\mathcal L_n (d)) = \mathcal
E(\mathcal L_n (\mathbf D))$, and
\item $\mathcal E(\mathcal L_n (d)) \cap \mathcal
E(\mathcal L_n (d')) = \emptyset$ for $d, d' \in \mathcal D, d \neq d'$.
\end{itemize}


\begin{theorem} 
The network formation game with multicast flows can be decomposed by network coding 
into independent games with unicast flows for edge-disjoint subgraphs.
\label{th:unicast}
\end{theorem}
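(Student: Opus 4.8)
The plan is to imitate the proof of Theorem~\ref{lem:indep}, now letting the destination index $d\in\mathcal D$ play the role that the edge-disjoint-subgraph index $n$ played there, and then to compose the two decompositions. By Theorem~\ref{lem:indep} we already have $\Phi(\mathbf G_{\mathcal G}(\mathbf D))=\prod_{n=1}^{N}\Phi(\mathbf G_{\mathcal L_n}(\mathbf D))$, so it suffices to prove, for each edge-disjoint subgraph $\mathcal L_n$, that
\begin{equation*}
\Phi(\mathbf G_{\mathcal L_n}(\mathbf D))=\prod_{d\in\mathcal D}\Phi(\mathbf G_{\mathcal L_n}(d)).
\end{equation*}
Chaining these identities then yields $\Phi(\mathbf G_{\mathcal G}(\mathbf D))=\prod_{n=1}^{N}\prod_{d\in\mathcal D}\Phi(\mathbf G_{\mathcal L_n}(d))$, in which every factor is a game with the single destination $v_d$, i.e., a unicast flow, played on an edge-disjoint (virtual) subgraph $\mathcal L_n(d)$; this is exactly the statement of the theorem.

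First I would note that the four bullet properties defining the virtual subgraphs $\mathcal L_n(d)$ are precisely the edge-disjoint-subgraph axioms used in Theorem~\ref{lem:indep}, re-stated at the level of virtual sublinks $e_{ij}(d)$: the $\{\mathcal L_n(d)\}_{d\in\mathcal D}$ may share vertices, while their sublink sets partition the link set of $\mathcal L_n(\mathbf D)$. Hence the same chain-rule expansion applies: writing $\mathcal D=\{d_1,\ldots,d_{|\mathcal D|}\}$ and using \eqref{eqn:sum} together with the network-status definition \eqref{eqn:network_status},
\begin{align*}
\mathbf G_{\mathcal L_n}(\mathbf D)&=\mathbf G_{\mathcal L_n}(d_1)\cdot\mathbf G_{\mathcal L_n}(d_2\,|\,d_1)\cdots\mathbf G_{\mathcal L_n}(d_{|\mathcal D|}\,|\,d_1,\ldots,d_{|\mathcal D|-1})\\
&=J_{\mathbf G_{\mathcal L_n}(d_1)}\cup J_{\mathbf G_{\mathcal L_n}(d_2|d_1)}\cup\cdots\cup J_{\mathbf G_{\mathcal L_n}(d_{|\mathcal D|}|d_1,\ldots,d_{|\mathcal D|-1})},
\end{align*}
and, exactly as in \eqref{eqn:lem_proof_1}, each conditional network status still carries the inter-link dependency, which at this level couples the sublinks $e_{ij}(d)$, $d\in\mathcal D$, that together constitute the physical link $e_{ij}$ --- indeed, without coding $e_{ij}$ transmits a single packet per slot and so cannot serve all of its destinations simultaneously.

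The crux is the application of $\Phi$. Inter-session network coding combines at $v_j$ all incoming source packets into the single coded payload $y_j=\sum_{k=1}^{N_V}\bigoplus(C_{kj}\otimes x_k)$ of \eqref{eqn:y}; just as in the interpretation following \eqref{eqn:Phi}, this coded packet can be read as carrying, simultaneously for every destination $v_d$, a component $p_j(d)$ assembled only from the source packets whose destination is $v_d$, each of which is recovered independently at $v_d$ from the global coding coefficients. Thus $\Phi$ turns the sublink-dependent data on $e_{ij}$ into data that is independent across $d$, removing the sublink coupling exactly as it removed the inter-link dependency in Theorem~\ref{lem:indep}. The conditional games therefore collapse to unconditional ones, giving
\begin{equation*}
\Phi(\mathbf G_{\mathcal L_n}(\mathbf D))=\Phi(J_{\mathcal L_n(d_1)})\cup\cdots\cup\Phi(J_{\mathcal L_n(d_{|\mathcal D|})})=\prod_{d\in\mathcal D}\Phi(\mathbf G_{\mathcal L_n}(d)),
\end{equation*}
which is the required identity, and the theorem follows by combining it with Theorem~\ref{lem:indep}.

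The step I expect to be the main obstacle is making the per-destination interpretation $p_j\mapsto(p_j(d))_{d\in\mathcal D}$ of a network-coded packet precise and arguing that it genuinely eliminates the sublink contention: one must organize the GF-linear combination in \eqref{eqn:y} so that each destination extracts exactly the mixture of source symbols intended for it, paralleling the non-literal ``$v_j$ receives $p_j$'' reading of \eqref{eqn:Phi}. A secondary point worth checking is that the induced per-destination action splitting $a_i\mapsto(a_i(d))_{d}$, together with the bookkeeping of the reward \eqref{eqn:reward} and cost \eqref{eqn:cost}, is consistent with the unicast subgames $\mathbf G_{\mathcal L_n}(d)$; once the coding interpretation is fixed, the remainder is a verbatim rerun of the network-status manipulations in the proof of Theorem~\ref{lem:indep}.
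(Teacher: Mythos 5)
Your proposal is correct and takes essentially the same route as the paper: the paper likewise observes that the virtual subgraphs $\mathcal L_n(d)$ satisfy the edge-disjointness axioms and then simply invokes Theorem~\ref{lem:indep} a second time to obtain $\Phi\left(\mathbf G_{\mathcal L_n}(\mathbf D)\right)=\prod_{d\in\mathcal D}\Phi\left(\mathbf G_{\mathcal L_n}(d)\right)$, before composing this with the first decomposition. Your version merely inlines that second application of Theorem~\ref{lem:indep} (re-running the chain-rule and network-status manipulations with $d$ in place of $n$) instead of citing it.
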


\begin{proof}

In this proof, we show that $\Phi ( {\mathbf G_{\mathcal G}}(\mathbf D))$ can be
decomposed into $\Phi( {\mathbf G_{\mathcal L_n}} ( d))$ for $1 \le n \le N$ and $d
\in \mathcal D$. 

In Theorem~\ref{lem:indep}, it is shown that 
\begin{align}
\Phi (\mathbf G_{\mathcal G}(\mathbf D)) 
&= \prod_{n=1}^{N}\Phi \left(\mathbf G_{\mathcal L_n}(\mathbf D) \right).
\label{eqn:th_proof_def}
\end{align}
Since a subgraph $\mathcal L_n(\mathbf D)$ can be decomposed into virtual subgraphs
$\mathcal L_n( d), \forall d \in \mathcal D$,   
the game $\mathbf G_{\mathcal L_n}(\mathbf
D)$ with network coding can also be decomposed 
into independent games 
$\mathbf
G_{\mathcal L_n}(d), \forall d \in \mathcal D$
for virtual subgraphs based on Theorem~\ref{lem:indep}, i.e., 
\begin{equation}
\Phi \left(\mathbf G_{\mathcal L_n}(\mathbf D) \right) = \prod_{d \in \mathcal D} \Phi \left( \mathbf G_{\mathcal L_n} (d)\right). 
\label{eqn:lem_proof_2} 
\end{equation}
Therefore, we can conclude from \eqref{eqn:th_proof_def} and \eqref{eqn:lem_proof_2} 
that
\begin{align}
\Phi (\mathbf G_{\mathcal G}(\mathbf D)) 
=  \prod_{n=1}^{N}\prod_{d \in \mathcal D} \Phi \left( \mathbf G_{\mathcal L_n} (d)\right), 
\end{align}
which completes the proof. 
\end{proof}
Theorem~\ref{th:unicast} implies that  network coding allows the network formation
game for multicast flows (i.e., $\mathbf G_{\mathcal L_n}(\mathbf D)$) to be
decomposed into independent games with unicast flows for edge-disjoint subgraphs
(i.e., $\mathbf G_{\mathcal L_n} (d), \forall d \in \mathcal D$). 
Moreover, Theorem~\ref{th:unicast} enables  
the topology of a network with  
multi-source multicast flows to be determined in a distributed way, 
by solving independent games of edge-disjoint subgraphs with unicast flows, referred
to as the 
\emph{link formation game} in this paper. More details about the link formation game
are given in the next section. 
An illustrative example of the network formation and link formation games are shown in Figure~\ref{fig:game_decomposition}.
\begin{figure}[tb]
\centering
\includegraphics[width = 8cm]{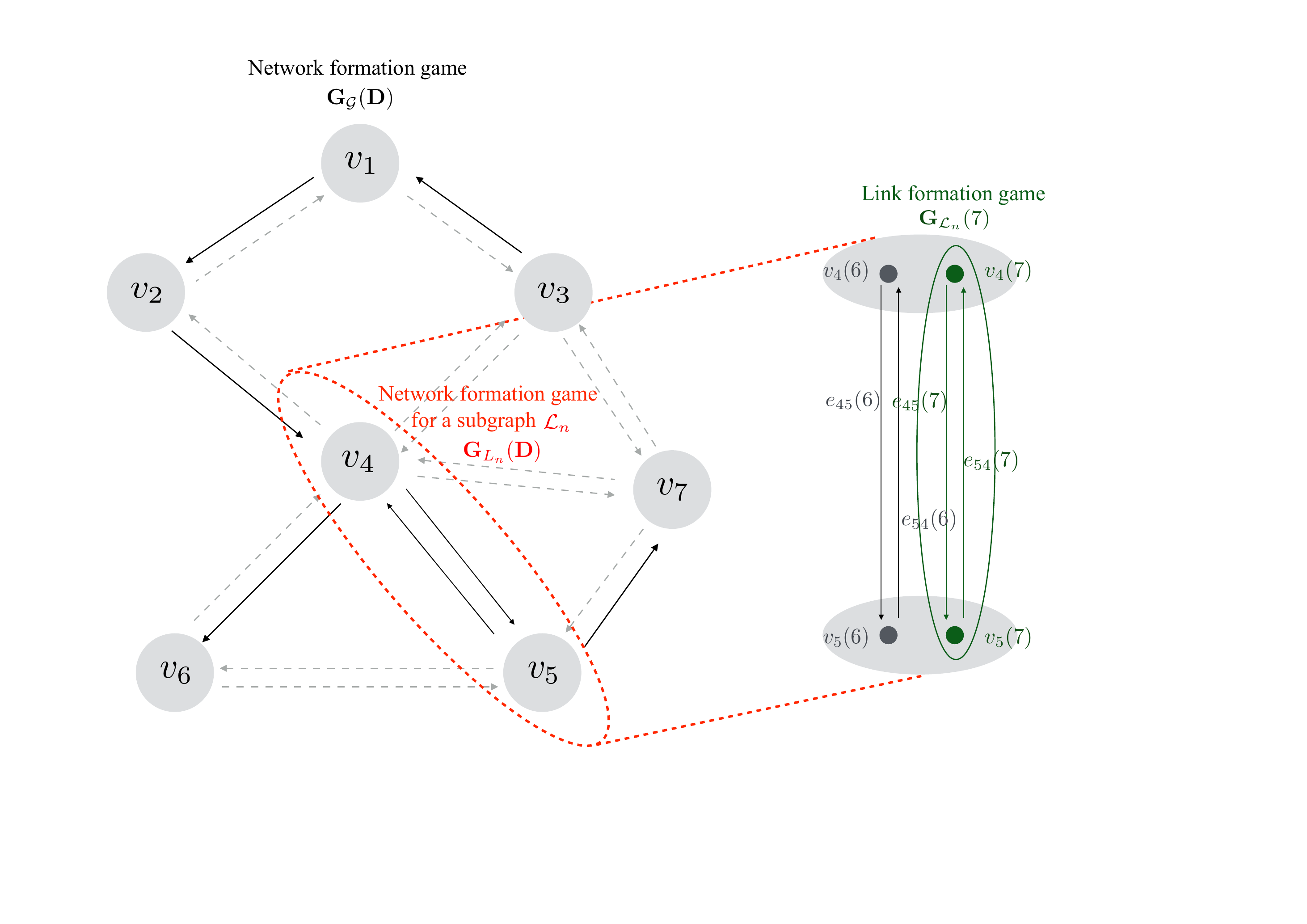}
\caption{The decomposition of the network formation game into link formation
games for the network illustrated in Figure~\ref{fig:sensor_NW}. 
The network formation game $\mathbf G_{\mathcal G}(\mathbf D)$ is decomposed into
independent games of edge-disjoint subgraphs  $\mathbf G_{\mathcal L_n}(\mathbf D)$, 
which are further decomposed into link formation games, $\mathbf G_{\mathcal L_n}(d)$. Note
that the decomposition can be permitted only by deploying network coding. 
}
\label{fig:game_decomposition}
\end{figure}

\subsection{Link Formation Games and Distributed Topology Design}
\label{sec:LFG}

As discussed in Section~\ref{subsec:NC_game}, a link formation game consists of two
players with a unicast flow. 
%
%
The strategic form of the link formation game can be expressed as
\begin{equation}
\mathbf G_{\mathcal L}(d)= \langle \mathcal V(\mathcal L),  (\mathbf a_i)_{v_i \in
\mathcal V(\mathcal L)}, (u_i(a_i, a_{-i},d))_{v_i \in \mathcal V(\mathcal L)} \rangle 
\label{eqn:lfg_def}
\end{equation}
where $\mathcal V(\mathcal L) = \{v_i, v_j\}$ and $\mathbf a_i = \{0, 1\}$ denote 
a player set and an action set 
for player ${v_i}$ for destination $v_d$, respectively. 
The utility function is expressed as 
\begin{align}
u_i(a_i, a_{j}, d) &= R_i(a_i, d) - \lambda_i(a_i, a_j)\\
&= a_i \left( f(\delta_{jd}) -f\left(\delta_{id}\right) \right) -  \Lambda \cdot \frac{a_i}{a_i+a_{j}} 
\label{eqn:LFG_util_2}
\end{align}
where $f(\delta_{id}):\mathbb R \rightarrow \mathbb R$ is an
inversely proportional function 
of $\delta_{id}$. 
For the link formation game, the cost function can be expressed as
\begin{align}
\notag \lambda_i(a_i, a_{j})= \Lambda \cdot \frac{a_i}{a_i+a_{j}}  =
    \begin{cases}
      \Lambda  , & \text{if}\ a_i=1 ,a_{j}=0  \\
      \Lambda/2, &\text{if}\ a_i=1 ,a_{j}=1  \\
      0, & \text{if}\ a_i=0
    \end{cases}.   
\end{align}
The corresponding normal form of the link formation game 
is shown in Table~\ref{tab:normal_form}. 
\begin{table*}[tb]
  \caption{The normal form of the link formation game}
\centering
  \begin{tabular}{ |c || c | c|  }
    \hline
   ($u_{i}, u_{j},d$) & $a_j = 1$ & $a_j  = 0$ \\ \hline\hline
    $a_i = 1$ & ${(R_i(1, d)}-\frac{\Lambda}{2}, {R_j(1, d)}-\frac{\Lambda}{2})$ &  ${(R_i(1, d)}-\Lambda, 0)$ \\
 \hline
    $a_i = 0$ & $(0, {R_j(1, d)}-\Lambda)$ & $ (0,0)$ \\
  \hline
  \end{tabular}
  \label{tab:normal_form}
 \end{table*}


As a solution concept for the link formation game,
we adopt the pure strategy Nash equilibrium (NE). A pure strategy NE $(a_i^*, a_j^*)$ 
for $v_i$ and $v_j$ 
can be 
expressed as 
\begin{equation}
  u_i (a_i^*, a_{j}^*,d) \ge u_i (a_i, a_{j}^*,d) \,\,\textrm{for all} \,\, a_i \in
  \mathbf a_i
\label{eqn:NE}
\end{equation}
and 
\begin{equation}
  u_j (a_i^*, a_{j}^*,d) \ge u_j (a_i^*, a_{j},d) \,\,\textrm{for all} \,\,
  a_j \in
  \mathbf a_j.
\label{eqn:NEj}
\end{equation}
If multiple pure strategy NEs exist, the set of pure strategy NEs is denoted by 
\begin{align}
 \notag \mathbf A^* = \{(a_i^*, a_{j}^*)|& u_i (a_i^*, a_{j}^*,d) \ge
 u_i (a_i, a_{j}^*,d),\forall a_i \in \mathbf a_i,  \\
 \notag & u_j (a_i^*, a_{j}^*,d) \ge
u_j (a_i^*, a_{j},d),  \forall a_j \in \mathbf a_j \}.
\end{align}
The pure strategy NE enables
nodes $v_i$ and $v_j$ to decide which  
outgoing links are
active or inactive, 
resulting in 
a stable network topology  
$\mathcal E(\mathcal L)$. 

\floatname{algorithm}{Algorithm}
\algsetup{indent= 1em}
\begin{algorithm}[tb]
        \caption{Algorithm for Distributed Topology Design Based on Link Formation
		Games}
        \label{alg:bigpicture}
\begin{algorithmic}[1]
\smallskip
\REQUIRE a set of nodes $\mathcal V(\mathcal G)$, 
sets of neighbor nodes $\mathcal H_i$ for $v_i \in \mathcal V(\mathcal G)$, 
an index set of destinations $\mathcal D$, 
utility function $u_i(a_i,a_{j}, d)$ for $v_i \in \mathcal V(\mathcal G)$
\vspace{0.2cm}
\STATE \text{\bf Initialize:} {$\mathcal E(\mathcal G) = \emptyset$}\vspace{0.1cm}
\STATE Decompose $\mathcal V(\mathcal G)$ into a set of node pairs $\mathcal V(\mathcal L_n)$ for $1 \le n \le {N_V \choose 2}$
\vspace{0.1cm}
\FOR {$n$ = 1: $N_V \choose 2$ }
\STATE $(v_i, v_j) \leftarrow \mathcal V(\mathcal L_n)$ \texttt{//assign players}
\vspace{0.2cm}
\FOR {$d \in \mathcal D$}
\STATE \text{\bf Initialize:} $(a_i^*, a_j^*) \leftarrow (0, 0)$, $flag =0$
\vspace{0.2cm}
\STATE \texttt{ // begin link formation game} \vspace{0.2cm}
\WHILE {$flag =0$}\vspace{0.1cm}
\STATE $temp\_a \leftarrow (a_i^*, a_j^*)$\vspace{0.1cm}
\STATE $a_i^* \leftarrow \arg\max_{a_i} u_i(a_i, a_{j}^*,d)$\vspace{0.1cm}
\STATE $a_j^* \leftarrow  \arg\max_{a_j} u_j(a_i^*, a_{j},d)$\vspace{0.2cm}
\IF {$temp\_a = (a_i^*, a_j^*)$}
\STATE $flag \leftarrow 1$ 
\ENDIF
\ENDWHILE
\STATE $e_{ij} \leftarrow a_i^*$, $e_{ji} \leftarrow a_j^*$\vspace{0.2cm}
\ENDFOR
\ENDFOR
\RETURN $\mathcal E(\mathcal G)$
\end{algorithmic}
\end{algorithm}
The steps for the proposed solution are described in
Algorithm~\ref{alg:bigpicture}.
In Algorithm~\ref{alg:bigpicture},  nodes $\mathcal V(\mathcal G)$ are decomposed 
into $N = {N_V \choose 2}$ sets of node pairs $\mathcal V(\mathcal L_n)$ 
for $1 \le n \le N$. 
Then, the link formation game is formulated as 
$\mathbf G_{\mathcal L_n}(d)= \langle \mathcal V(\mathcal L_n),
(\mathbf a_i)_{v_i \in \mathcal V(\mathcal L_n)}, (u_i(a_i, a_{-i},d))_{v_i \in \mathcal
V(\mathcal L_n)} \rangle$
given $\mathcal V(\mathcal L_n)$ and a destination node $v_d$. 
The link formation games, 
$\mathbf G_{\mathcal L_n}(d)$ for $1 \le n \le N$, $\forall d \in
\mathcal D$, 
are solved by finding 
a pure strategy NE, and 
all the active links can eventually be included in $\mathcal E(\mathcal G)$.

Note that Algorithm~\ref{alg:bigpicture} is guaranteed to determine at least one
stable topology, as shown in Theorem~\ref{th:NE_proof}.


\begin{theorem}
It is guaranteed that at least one topology can be determined by Algorithm~\ref{alg:bigpicture}. 
\label{th:NE_proof}
\end{theorem}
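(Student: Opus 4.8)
The plan is to reduce the claim to a statement about the individual link formation games. By Theorems~\ref{lem:indep} and~\ref{th:unicast}, deploying network coding lets us replace $\mathbf G_{\mathcal G}(\mathbf D)$ by the collection of two‑player, single‑destination games $\mathbf G_{\mathcal L_n}(d)$ for $1\le n\le\binom{N_V}{2}$ and $d\in\mathcal D$, and this is precisely the family of subproblems that Algorithm~\ref{alg:bigpicture} iterates over in its two finite \textbf{for}-loops. Hence it suffices to establish two facts: (i) every link formation game admits at least one pure-strategy NE in the sense of \eqref{eqn:NE}--\eqref{eqn:NEj}, and (ii) the inner alternating best-response loop (lines~8--15) reaches such an NE after finitely many passes. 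Once these hold, each call writes a well-defined pair $(e_{ij},e_{ji})$, the union of all active links over $n$ and $d$ is a well-defined edge set $\mathcal E(\mathcal G)$, and --- again by Theorem~\ref{th:unicast} --- this per-subgraph, per-destination assembly is a legitimate decomposition of $\mathbf G_{\mathcal G}(\mathbf D)$, so the returned $\mathcal E(\mathcal G)$ is indeed a topology of $\mathcal G$.

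For step~(i) I would argue directly from the $2\times2$ normal form in Table~\ref{tab:normal_form}. From \eqref{eqn:LFG_util_2} one reads off the zero-sum-like identity $R_i(1,d)=f(\delta_{jd})-f(\delta_{id})=-\bigl(f(\delta_{id})-f(\delta_{jd})\bigr)=-R_j(1,d)$, so at most one of $R_i(1,d),R_j(1,d)$ is strictly positive; say $R_i(1,d)\ge 0\ge R_j(1,d)$ (the case with the roles of $v_i,v_j$ swapped is identical, and the order in which Algorithm~\ref{alg:bigpicture} updates the players does not matter). Since the unit link cost is strictly positive, $a_j=0$ weakly dominates $a_j=1$ for the ``far'' player $v_j$: against $a_i=1$ it gives $0>R_j(1,d)-\Lambda/2$, and against $a_i=0$ it gives $0>R_j(1,d)-\Lambda$. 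Fixing $a_j^{*}=0$, the best response of $v_i$ is $a_i^{*}=1$ when $R_i(1,d)\ge\Lambda$ and $a_i^{*}=0$ otherwise; in either case the resulting profile (either $(1,0)$ or $(0,0)$) satisfies \eqref{eqn:NE}--\eqref{eqn:NEj}, and the degenerate case $R_i(1,d)=R_j(1,d)=0$ likewise yields the NE $(0,0)$. Thus a pure-strategy NE always exists.

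Step~(ii) follows from the same dominance observation: once line~11 executes, $a_j^{*}=0$ is locked in (it is a best response regardless of $a_i^{*}$), line~10 then pins $a_i^{*}$ to its best response against $0$, and on the next pass \texttt{temp\_a} already equals $(a_i^{*},a_j^{*})$, so \texttt{flag} is set and the loop exits --- in at most two iterations. Any tie in the $\arg\max$ steps is harmless, because every maximizer is by definition a best response, so the fixed point reached is an NE no matter how ties are broken. The only genuinely non-routine point --- and therefore the ``hard part'' --- is step~(i): a generic $2\times2$ game need not have a pure NE (matching pennies being the standard counterexample), so the argument must exploit the specific structure of the utilities, namely $R_i(1,d)=-R_j(1,d)$ together with $\Lambda>0$, which is exactly what excludes a cyclic best-response pattern; everything else (finiteness of the loops, well-definedness of $\mathcal E(\mathcal G)$, and consistency via Theorems~\ref{lem:indep}--\ref{th:unicast}) is bookkeeping.
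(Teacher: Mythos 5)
Your proposal is correct, but it takes a genuinely different route from the paper. The paper's proof first invokes Nash's existence theorem to obtain a mixed-strategy NE of the finite game $\mathbf G_{\mathcal L}(d)$, and then shows by a perturbation argument ($\alpha_i^*+\xi$, with the sign analysis on the coefficient $\beta$) that one of the pure strategies $\alpha_i\in\{0,1\}$ always weakly dominates any mixed strategy, from which a pure-strategy NE is extracted. You instead analyze the $2\times 2$ normal form directly, exploiting the antisymmetry $R_i(1,d)=-R_j(1,d)$ (which is exactly what rules out a matching-pennies-type cycle) together with the nonnegativity of $\Lambda$ to identify a (weakly) dominant action for the ``far'' player and then a best response for the ``near'' player. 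Your argument is more elementary --- it avoids Nash's theorem entirely --- and it is also more complete in one respect: you explicitly verify that the alternating best-response loop in Algorithm~\ref{alg:bigpicture} terminates in at most two passes, a point the paper's proof does not address (it only establishes existence of the NE, not that the iteration in lines~8--15 reaches it). Two small caveats: the paper's simulations allow $\Lambda=0$, so your strict inequalities for the far player should be weakened to ``$\ge$'' (the weak-dominance conclusion survives), and your termination argument implicitly requires the $\arg\max$ tie-breaking to be deterministic; with an adversarial tie-breaking rule the loop could in principle oscillate between payoff-equivalent profiles. Neither caveat affects the validity of the existence claim, which is the substance of the theorem.
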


\begin{proof}
See \ref{sec:NE_proof}. 
\end{proof}

Based on Theorem~\ref{th:NE_proof}, we confirm that Algorithm~\ref{alg:bigpicture}
can always provide at least one network topology for the multi-source multicast network. In the next
section, we provide performance evaluation based on simulation results.  

\section{Simulation Results}
\label{sec:simulation}

In this simulation, we consider a sensor network
in which all sensors collect data, and some of them are destination nodes. 
Sensor nodes aim to deliver their collected data to their destination nodes in 
$\mathcal D_i$ by making links between them. 


\subsection{Simulation Setup}
We consider $N_V$ nodes in a cell with a radius of $R$, 
where the nodes are randomly located over the cell 
based on a uniform distribution. 
Since a node $v_i$ would like to deliver its collected data to its destination nodes
in $\mathcal D_i$,  
%
there are $\sum_{i = 1}^{N_V} |\mathcal D_i|$ total flows in the network.  
The connection boundary is set as $\Delta = R$ so that 
the neighbor nodes of $v_i$ are determined as $\mathcal H_i = \{ v_j| 0 < \delta_{ij} \le R\}$. 
Note that the longest distance between two edge nodes of the cell can be $2R$ such that our simulation  setting $\Delta = R$ may induce a multi-hop network. 
Each node makes its own decisions for outgoing link formation with neighbor nodes
based on  
Algorithm~\ref{alg:bigpicture}. 

In the simulations, we use a function 
$f(\delta_{id})$ for the utility function 
in \eqref{eqn:LFG_util_2},
defined as
\begin{equation}
f(\delta_{id}) =  \frac{1}{\delta_{id}^2 + 1}
\label{eqn:delta_function} 
\end{equation}
which satisfies the requirements for $f(\delta_{id})$
discussed in Section~\ref{subsec:NFG}, 
i.e.,   
it is
inversely proportional to $\delta_{id}$ and $f(\delta_{jd})
-f(\delta_{id}) > 0$ for $\delta_{id} > \delta_{jd}$. 
Hence, 
if $v_i$ makes an outgoing link to $v_j$, which is closer to a destination node
than $v_i$, then positive rewards are given to $v_i$. 
For example, consider two node pairs $(v_i, v_j)$ and $(v_i', v_j')$ that have 1)
the same distance between them, i.e., $\delta_{id}-\delta_{jd} =
\delta_{i'd}-\delta_{j'd}$, but 2) different locations, i.e., $\delta_{id}
<\delta_{i'd},  \delta_{jd} <\delta_{j'd}$. Then, 
\begin{align}
f(\delta_{jd}) -f\left(\delta_{id}\right) >  f(\delta_{j'd}) -f\left(\delta_{i'd}\right), 
\label{eqn:loc}
\end{align}
which implies that the nodes close to the destination $(v_i, v_j)$ receive more 
reward than the distant nodes  $(v_i', v_j')$. 
The utility function of the link formation game in \eqref{eqn:LFG_util_2} can
be correspondingly expressed as
\begin{align}
u_i(a_i, a_{j}, d) 
&= a_i \left( \frac{1}{\delta_{jd}^2 + 1} -\frac{1}{\delta_{id}^2 + 1} \right) -
\left( \frac{a_i}{a_i+a_{j}} \cdot \Lambda \right)\label{eqn:lfg_simul}.
\end{align}

We finally define {\it network utility} as a measure of  
 the resulting networks performance, expressed as  
\begin{align*}
U(a_i, a_{-i}, \mathbf D) = \sum_{\forall v_i \in \mathcal V(\mathcal G)} \left( \sum_{d\in\mathbf D} R_i(a_i, d) - \lambda_i (a_i, a_{-i})
  \right)
\end{align*}  
which includes both the total rewards from 
all the destination nodes and the costs required to make links in the networks.
Therefore, the network utility can be used  to quantify 
how many rewards can be earned by the nodes while reducing the costs of 
link formation. 

\subsection{Numerical Analysis of the Proposed Topology Design}

In this section, we numerically analyze several aspects of the proposed
algorithm for the topology formation implemented by 
Algorithm~\ref{alg:bigpicture}. 
In the simulations, we consider two destination nodes, i.e., $|\mathcal D_i|=2$ for
$1 \le i \le N_V$, which are randomly determined in each experiment unless otherwise stated. 
The connection boundary is set as $R = 10$ 
and all the experiment results are averaged from $1,000$ independent experiments.

Figure~\ref{fig:location_impact} shows the effects of node locations on the
probability of link formation. 
\begin{figure}[tb]
  \centering
\includegraphics[width = 9cm]{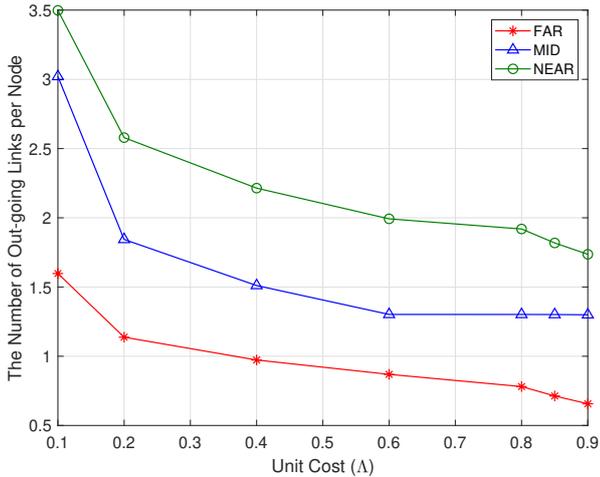}
\caption{The number of outgoing links per node for $N_V = 50$. 
}
\label{fig:location_impact}
\end{figure}
In the experiment, 
we consider two adjacent nodes 
as the destination for all nodes, located at the cell edge,
i.e.,  $\mathcal D_1 = \cdots = \mathcal D_{N_V}$. 
The nodes in the network are classified as three types based on their distance from the 
destination nodes: those that are close to the destinations (NEAR), far from the
destinations (FAR), and in the middle of them (MID).  
Figure~\ref{fig:location_impact} clearly confirms that 
the nodes closer to the destinations make more outgoing links. This is because 
$f(\delta_{id})$ in the utility function 
enables nodes closer to the destinations 
to obtain more rewards by making outgoing links. 

\begin{figure}[tb]
\centering
\includegraphics[width = 9cm]{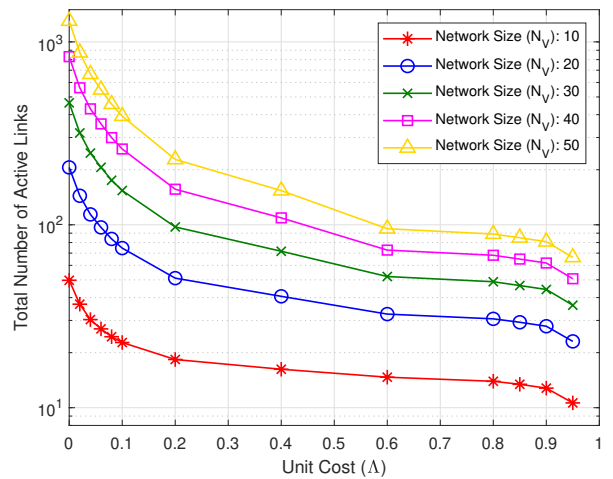}
\caption{ The number of active links for unit costs in various network sizes. }
%
\label{fig:num_of_links}
\end{figure}
Figure~\ref{fig:num_of_links} shows the total number of active links in a network
($|\mathcal E(\mathcal V)|$) for various unit costs ($\Lambda$) and 
network sizes 
($N_V$). 
It can be confirmed that a smaller number of 
active links is included in a resulting network topology 
as the network size
decreases or 
the unit cost increases. 
This is because the nodes with more neighbor nodes or with lower unit cost
can make a larger number of
active links.

The number of active links in a network topology can influence 
the successful
connection from source nodes to destination nodes. 
In order to evaluate
the impact of the number of active links on successful network formation, we
define the {\it connection failure ratio} as the 
number of
disconnected flows over the total number of flows ($\sum_{i=1}^{N_V} |\mathcal
D_i|$). 
In Figure~\ref{fig:plr}, connection failure ratios for unit costs are presented. 
\begin{figure}[tb]
\centering
\includegraphics[width =9cm]{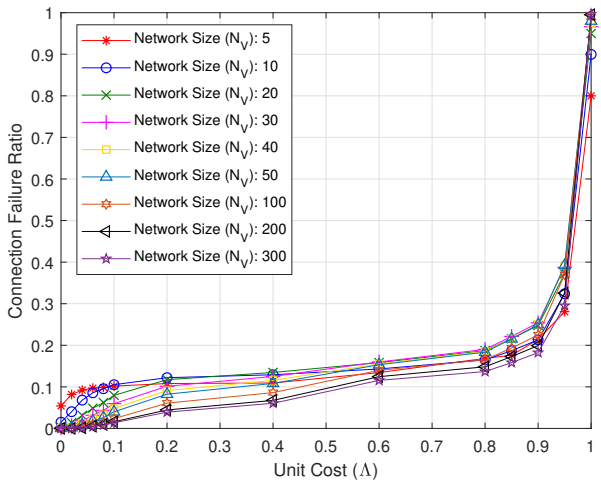}
\caption{ Connection failure ratios for unit costs in various network sizes. 
}
\label{fig:plr}
\end{figure}

As shown in Figure~\ref{fig:num_of_links},
the number of active links
decreases rapidly as $\Lambda$ increases
in the range of  small 
$\Lambda$ values (e.g., $ 0 \le \Lambda \le 0.2$). However,   
this does not significantly affect
the connection failure ratios as shown  in Figure~\ref{fig:plr}.
In contrast, 
a small number of links
can be active in the range
of  large $\Lambda$ values (e.g., 
$0.8 \le \Lambda  \le 1$), 
which significantly 
increases the connection failure ratio, thereby resulting in a high probability of unsuccessful
data delivery.  
Hence, 
a network can be sustainable in terms of successful
data delivery only if the number of active links is large enough to take advantage of path diversity. 
Note that 
Algorithm~\ref{alg:bigpicture} is scalable to the network size
in terms of the connection failure ratio (or
network topology formation) since the impact of network size on the connection
failure ratios is limited, as shown in Figure~\ref{fig:plr}. Therefore, 
the proposed algorithm can be deployed in large-scale networks.  

In Figure~\ref{fig:Net_utility_NW_size}, the network utilities for various network sizes 
and unit costs are shown. 
\begin{figure}[tb]
\centering
\includegraphics[width = 9cm]{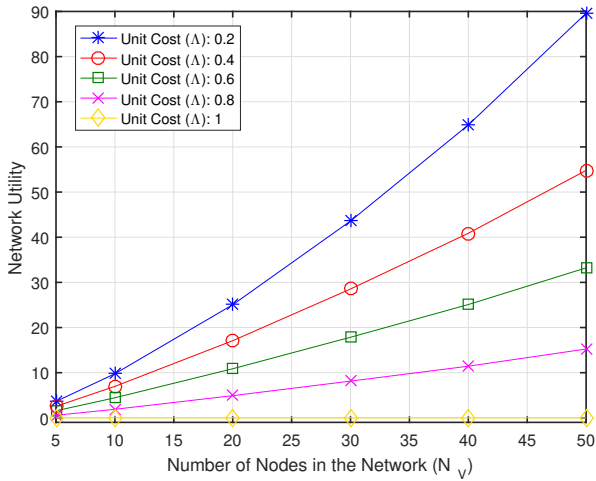}
\caption{ Network utilities for network sizes with various unit costs. }
\label{fig:Net_utility_NW_size}
\end{figure}
The network utility increases as $N_V$ increases (i.e., more rewards can be achieved) 
or $\Lambda$
decreases (i.e., the cost for link formation is lowered). 
In the experiments, for example, 
network nodes decide not to build any links 
if $\Lambda =1$, 
achieving no network utility.

\begin{figure}[tb]
\centering
\includegraphics[width = 9cm]{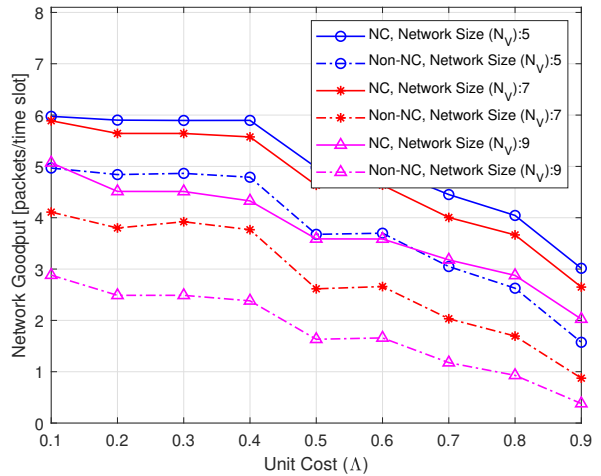}
\caption{ 
Network goodput with and without network coding for unit costs in various network sizes. }
\label{fig:goodput}
\end{figure}
\begin{figure}[tb]
\centering
\includegraphics[width = 9cm]{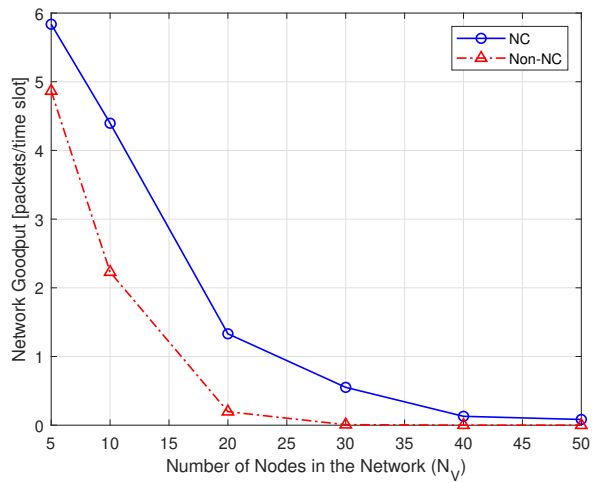}
\caption{ 
Network goodput with and without network coding over network sizes when $\Lambda = 0.1$. }
\label{fig:goodput_NWsize}
\end{figure}
%
We next evaluate the performance of the  
proposed algorithm in terms of 
the network goodput, measured by 
the number of packets successfully delivered to the destinations 
per time slot. 
The results are shown in Figure~\ref{fig:goodput} and Figure~\ref{fig:goodput_NWsize}.
In this experiment, the connection boundary is set as $\Delta = 1.1R$, and all nodes in the network 
generate and deliver packets toward two destination nodes. 
They make decisions on
link formation 
based on 
Algorithm~\ref{alg:bigpicture}.
It is obvious that network goodput decreases as unit cost increases because the 
connection
failure ratio also increases, 
as confirmed in Figure~\ref{fig:plr}.
Figure~\ref{fig:goodput} shows that higher network goodput can be achieved by deploying
network coding, which  
means that the proposed algorithm successfully builds the
network while taking advantage of network coding. 
In Figure~\ref{fig:goodput} and Figure~\ref{fig:goodput_NWsize}, it is observed that network goodput decreases as
network size increases. 
This is because 
the number of hops required for a packet to arrive at the
destination increases as the network size is enlarged, taking a 
longer time 
for packet delivery. 
The results are well-aligned with  \cite{gupta2000}, which theoretically proves that the goodput scales as $\mathcal O(1/\sqrt{N_V \log N_V})$ in random networks and $\mathcal O(1/\sqrt{N_V})$ in the optimal networks. 

%



In the next section,
we compare the network performances achieved by topology formation strategies including
the proposed algorithm.


\subsection{Performance Comparison}
In this section, we evaluate the performance of the proposed algorithm in terms of
network utility and computational complexity. The performance of the proposed algorithm is 
compared with the existing network formation
strategies shown below. 
\begin{enumerate}
\item \emph{Non-NC Centralized}: A centralized solution to find  an optimal 
  network topology based on exhaustive search. The network does not deploy network coding such that the solution should
  be found by explicitly considering the inter-link
dependency condition. This strategy can be formulated as 
\begin{align*}
&(a_i^*, a_{-i}^*) = \arg\max_{(a_i, a_{-i})} U(a_i, a_{-i}, \mathbf D)\\
&\text{subject to } \sum_{\forall v_j \in \mathcal H_i} E_{ij} (a_i) \le 1, \forall v_i \in \mathcal V(\mathcal G).
\end{align*}

\item \emph{NC Centralized}: A centralized solution to find  an optimal 
  network topology based on exhaustive search.  The network deploys  network coding such that the inter-link dependency
  condition cannot be considered. It can be formulated as
\begin{align*}
(a_i^*, a_{-i}^*) = \arg\max_{(a_i, a_{-i})} U(a_i, a_{-i}, \mathbf D).
\end{align*}

\item \emph{TCLE}~\cite{xu2016}: A distributed solution to topology formation based
  on a non-cooperative game. For fair comparison, we deploy network coding,  
  and no
  inter-link dependency condition is considered. 
In this strategy, a node chooses its transmission power by balancing the target network
connectivity redundancy $\epsilon$ against transmission energy dissipation. The number of
actions available for a node is denoted by $\eta$. 

\end{enumerate}

\begin{figure}[tb]
\centering
\includegraphics[width = 9cm]{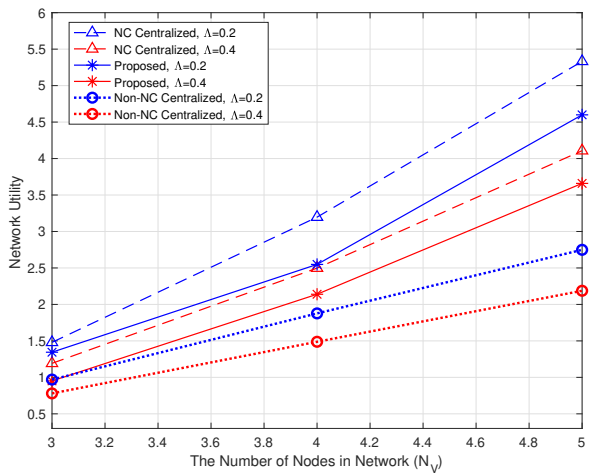}
\caption{Network utility for four strategies. 
}
\label{fig:comp_plot}
\end{figure}
Figure~\ref{fig:comp_plot} shows the network utility from each network formation
strategy, including the proposed algorithm. 
The NC-Centralized strategy can achieve the highest network utility
because 
the optimal network topology can be chosen from all possible topologies that can be
formed from other strategies. 
Hence, this can be considered as the upper bound of network coding based
strategies. 
The proposed strategy provides higher network utility than the Non-NC
Centralized strategy and TCLE. 
Unlike the proposed approach, which can consider more topologies to maximize rewards
by making 
multiple outgoing links, the Non-NC Centralized
strategy can find the optimal topology that is allowed only by 
the 
inter-link dependency condition. 
Among the considered strategies, TCLE provides the lowest network utility
because the focus of TCLE is not on construction of
successful connections between source nodes and 
destination nodes when it forms a network topology. Instead, it considers overall
network connectivity in terms of algebraic connectivity~\cite{gross2004}.

 \begin{figure}[tb]
\centering
\includegraphics[width = 9cm]{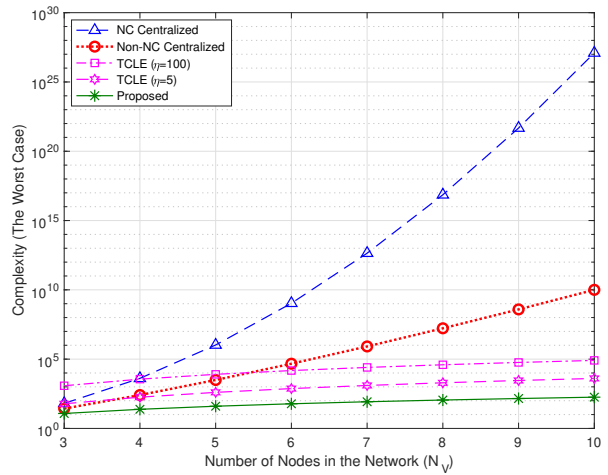}
\caption{Complexity required for the four strategies (worst case scenarios). 
}
\label{fig:complexity_comp}
\end{figure}

 \begin{figure}[tb]
\centering
\includegraphics[width = 9cm]{./figure/complexity_v3}
\caption{Complexity required for the four strategies (worst case scenarios). 
}
\label{fig:complexity_comp_large}
\end{figure}

 \begin{figure}[tb]
\centering
\includegraphics[width = 9cm]{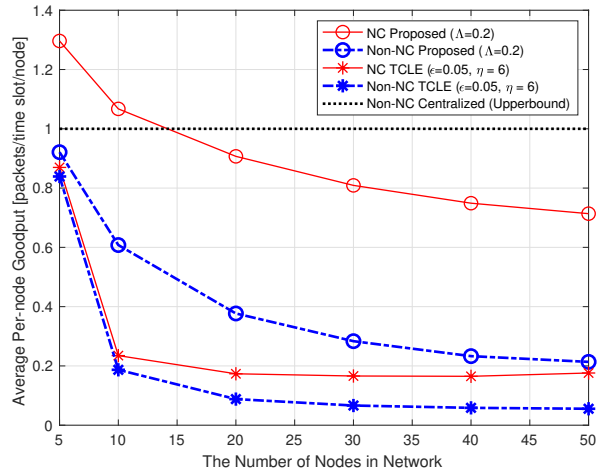}
\caption{Average per-node goodput for various network sizes and strategies
}
\label{fig:per-node goodput}
\end{figure}

\begin{table*}[tb]
\caption{ Theoretical complexity with network size $n$}
\centering
  \begin{tabular}{ |c||c | c | c| c| }
    \hline
Strategy &   Non-NC Centralized  & NC Centralized & TCLE \cite{xu2016} & Proposed \\ \hline \hline
  Complexity (Worst Case) & $\mathcal O(n^n)$ & $\mathcal O(2^{n^2})$ & $\mathcal O(\eta \cdot n^3)$ &$\mathcal O (n^2)$ \\
 \hline
  \end{tabular}
  \label{tab:complexity}
 \end{table*}

We next investigate the complexity required to deploy the network formation
strategies, which is  
measured by the size of the search space. The search space is determined by
the number of actions available for link formation. 
The size of the network considered in the complexity analysis is $n$. 
For the Non-NC Centralized strategy, each node can make at most one outgoing
link, so that a node has at most $n$ link formation choices. 
Given $n$ nodes in the network, the maximum number of choices is thus $n^n$, and the complexity is $\mathcal O(n^n)$. 
For the NC Centralized strategy, each node has a maximum $2^{n-1}$ choices because each node has a maximum of $n-1$ neighbor nodes to make a link, and each link
can be either active or inactive. 
Given $n$ nodes in the network, the maximum number of choices is thus $2^{n(n-1)}$ and the complexity becomes $\mathcal O(2^{n^2})$. 
The proposed algorithm has a maximum of $n \choose 2$ 
neighbor node pairs, and each pair has four actions. Because each node pair
chooses an action independently, the search space becomes ${n \choose 2} \times 4$, so that the complexity becomes $\mathcal O(n^2)$.
The complexity in TCLE  is $\mathcal O(\eta \cdot n^3)$, because there are $n$ nodes and each one has the worst case complexity $\mathcal O(\eta \cdot n^2)$. 
Therefore, the proposed strategy requires the lowest complexity to find the optimal
network topology, so that it can be deployed in practice into a large-scale network. 
The complexity required for the four network formation strategies is summarized in 
Table~\ref{tab:complexity} and presented in 
Figure~\ref{fig:complexity_comp} and Figure~\ref{fig:complexity_comp_large}.
In Figure~\ref{fig:complexity_comp_large},
it is clearly observed that the complexity for centralized strategies grows much 
faster than decentralized strategies as the network grows. Thus, it is not feasible
to use such high complexity centralized strategies in a realistic large network
setting. 



We further show the performance comparison in
terms of average per-node goodput (packets/time slot/node) for decentralized
solutions (i.e., proposed and TCLE) over increasing network sizes up to $50$ nodes. The
location of node is randomly generated with uniform distribution and this is the
average results from $4000$ independent simulations. The average per-node goodput is
defined as the average number of successfully delivered packets per time per node~\cite{nikolaidis2013building, brahma2012traffic, nawab2011tmac, pandya2008goodput} 
such that it is a fair measure to compare goodput of the network with
different sizes. The per-node goodput can be different upon utilization of network
coding. In Figure~\ref{fig:per-node goodput}, we show both cases for a given network - with and without network
coding.
Figure~\ref{fig:per-node goodput} shows that the proposed solution outperforms TCLE in all range of network
sizes and both NC and Non-NC cases. As the network size grows, it is observed that
the per-node goodput decreases across the strategies. This is because as the network
size increases, it requires a longer travel time to arrive at the destination,
leading to a lower per-node goodput. If each node can send maximum a packet per time
slot, which is the assumption of this simulation, the upper bound of per-node goodput
in an optimal network without network coding becomes 1 per-node goodput, i.e., all
nodes send 1 packet per time slot (fully utilizing the link capacity) and all
transmitted packets are successfully delivered to destination (no connection failure,
no packet loss).


\section{Conclusion}
\label{sec:conclusion}

In this paper, we propose an algorithm for a distributed topology formation in 
network coding
enabled large-scale sensor networks with multi-source multicast flows. 
The distributed topology formation problem is formulated as 
a network formation game in which the players (nodes in the network) decide
whether to make links with neighbor nodes by considering a reward for  
distance reduction and the cost required 
for link formation. 
We show that the network formation game can be decomposed into 
independent link formation games
by deploying network coding. Network topologies can thus be determined based on the
solution to individual link formation games played by only two nodes, leading to 
a distributed algorithm with low complexity. 
We also show that the proposed algorithm 
guarantees to find at least one network topology.  
The simulation results confirm that the proposed algorithm can achieve high network
utility with significantly low complexity and is scalable to any network
size. Therefore, it can be deployed 
in large-size networks. 

\appendix

\section{Network Coding Based Dissemination}
\label{sec:NC}

In the network operation in \eqref{eqn:y}, node $v_j$ combines its data $x_j$ and 
all the incoming data $X_{ij}, \forall v_i \in \tilde
{\mathcal  H}_j^{in}$ 
multiplied by local coding coefficients $c_{ij}, \forall v_i \in \{\tilde {\mathcal
H}_j^{in}, v_j\}$, expressed as 
\begin{align}
y_j&= \sum_{v_i \in \tilde {\mathcal  H}_j^{in}} \bigoplus \left( c_{ij} \otimes X_{ij} \right) \oplus c_{jj} \otimes x_j \notag\\
&= \sum_{v_i \in \tilde {\mathcal  H}_j^{in}} \bigoplus \left( c_{ij} \otimes y_{i} \right) \oplus c_{jj} \otimes x_j \label{eqn:encoding1}\\
&=\sum_{v_i \in \tilde {\mathcal  H}_j^{in}} \bigoplus \left( c_{ij} \otimes
\left(\sum_{k=1}^{N_V}\bigoplus  \left( C_{ki} \otimes x_k \right) \right) \right)
\oplus c_{jj} \otimes x_j. \label{eqn:encoding2}
\end{align}
In \eqref{eqn:encoding1}, $X_{ij}=y_i$ because $p_i= [C_{1i}, \ldots, C_{N_Vi}, y_i]$
is transmitted through $e_{ij}$, and
\eqref{eqn:encoding2} is induced from \eqref{eqn:y}. 

Since the global coding coefficient $C_{kj}$ is updated in every encoding process, 
\eqref{eqn:encoding2} can be expressed as 
\begin{align}
y_j&=\sum_{v_i \in \tilde {\mathcal  H}_j^{in}} \bigoplus   \left(\sum_{k=1}^{N_V} \bigoplus  c_{ij} \otimes  C_{ki}  \otimes x_k  \right) \oplus c_{jj} \otimes x_j \notag\\
&= \sum_{k=1}^{N_V} \bigoplus \left( \sum_{v_i \in \{ \tilde {\mathcal  H}_j^{in}, v_j \}} \bigoplus  c_{ij} \otimes  C_{ki}  
  \right) \otimes x_k \label{eqn:encoding4}\\
&=\sum_{k=1}^{N_V}\bigoplus  \left( C_{kj} \otimes x_k \right). \label{eqn:encoding5} 
\end{align}
In~\eqref{eqn:encoding5}, the global coding coefficient is updated based on $C_{ki} =\sum_{v_i \in \{ \tilde {\mathcal  H}_j^{in}, v_j \}} \bigoplus  c_{ij}
\otimes  C_{ki}  $. 
Thus, $p_j = [C_{1j}, \ldots, C_{N_Vj},
y_j]$ is constructed and forwarded to the nodes in $\tilde {\mathcal  H}_j^{out}$. 

In this paper, 
we assume that all data in a network are elements in a GF with size $2^M$, denoted as 
GF($2^M)$, i.e., $x_i, y_i \in$ GF($2^M)$,
as the network coding operations in \eqref{eqn:y} 
are performed in a GF. 
Moreover, we use random linear network coding (RLNC)~\cite{HO2006} 
so that the local coding
coefficient is randomly selected in GF($2^M)$.

Let 
$\mathcal S_i = \{ v_j| v_i \in \mathcal
D_j, \forall v_j \in \mathcal V(\mathcal G) \}$
be a set of source nodes 
whose destination set includes $v_i$ and 
$\mathbf S_i = \{ j | v_j \in \mathcal S_i \}$ be 
an index set of source nodes for $v_i$. 
Given the packets $\tilde p_1, \ldots, \tilde p_K$ that $v_i$ received, 
we can construct 
a vector of network coded data $\tilde {\mathbf y} =
[ \tilde y_1, \ldots, \tilde y_K  ]^T$ and 
the global coding coefficient
matrix $\tilde {\mathbf C}$, expressed as 
\begin{align}
\tilde {\mathbf C} 
&= 
\begin{bmatrix}
    \tilde {C}_{11} &  \cdots & \tilde {C}_{N_V 1} \\
    & \vdots & \\
    \tilde {C}_{1K} &  \cdots & \tilde {C}_{N_V K}
        \end{bmatrix}
        = [\tilde {\mathbf c}_1 \,\, \cdots \,\, \tilde{\mathbf c}_{N_V} ] 
\end{align}
where $\tilde{\mathbf c}_j = [\tilde {C}_{j1}, \ldots, \tilde {C}_{jK} ]^T$.

\section{Proof of Theorem~\ref{th:NE_proof}}
\label{sec:NE_proof}
Theorem~\ref{th:NE_proof} can be proved by showing that 
at least one pure strategy NE exists for the link formation game, 
i.e. $\mathbf A^* \ne \emptyset$.  

Nash's Existence Theorem~\cite{nash1950Eq} shows that every
finite game has a mixed strategy NE, where 
pure strategies are chosen stochastically
with certain probabilities. Since the link formation game
$\mathbf G_{\mathcal L}(d)$ defined in \eqref{eqn:lfg_def} 
includes a finite number of nodes
and a finite number of actions, it is a finite game. 
Therefore,  a
mixed strategy NE exists for this game. 

Suppose that $\alpha_i$ is a strategy of player $v_i$ with the probability
of taking action $a_i = 1$. The corresponding utility is given by
\begin{align*}
u_i(\alpha_i, \alpha_{j}, d) = \alpha_i \left(  f(\delta_{jd})
-f\left(\delta_{id}\right) -  \left( \frac{1}{1+E_{ji}(\alpha_{j})} \cdot \Lambda \right)
\right).
\end{align*}
Let $\alpha_i^*$ and $\alpha_j^*$ be mixed strategy NEs for $v_i$ and $v_j$ that satisfy 
$u_i (\alpha_i^*, \alpha_{j}^*,d) \ge u_i (\alpha_i, \alpha_{j}^*,d)$ 
and
$u_j (\alpha_i^*, \alpha_{j}^*,d) \ge u_j (\alpha_i^*, \alpha_{j},d)$
for all $v_i, v_j \in \mathcal V (\mathcal L)$.   

For the $\xi \in [-\alpha_i^*, 1-\alpha_i^*]$ perturbation of mixed strategy NE $\alpha_i^*$, the
resulting utility of $v_i$ can be expressed as
\begin{align*}
&u_i(\alpha_i^* + \xi, \alpha_{j}, d)\\
 &= (\alpha_i^* + \xi) \times \left(  f(\delta_{jd}) -f\left(\delta_{id}\right) -
 \left( \frac{1}{1+E_{ji}(\alpha_{j})} \cdot \Lambda \right) \right)\\
&=u_i(\alpha_i^* , \alpha_{j}, d) + \xi \times\left(  f(\delta_{jd})
-f\left(\delta_{id}\right) -  \left( \frac{1}{1+E_{ji}(\alpha_{j})} \cdot \Lambda \right) \right)\\
&=u_i(\alpha_i^* , \alpha_{j}, d) + \xi \times \beta.
\end{align*}
If $\beta > 0$,  then ${\partial u_i(\alpha_i^* + \xi, \alpha_{j}, d) \over \partial
\xi}
> 0$, and thus, $v_i$ can always decrease its utility by decreasing $\xi$. This 
means that the pure strategy $\alpha_{i}=0$ strictly dominates any strategies
$\alpha_{i}^*+\xi$.
For $\beta < 0$, on the other hand, ${\partial u_i(\alpha_i^* + \xi, \alpha_{j}, d) \over
\partial \xi} < 0 $, and thus, the pure strategy $\alpha_{i}=1$ strictly dominates  
any strategies $\alpha_{i}^*+\xi$.
If $\beta = 0$, then ${\partial u_i(\alpha_i^* + \xi, \alpha_{j}, d) \over \partial
\xi} = 0 $ and $\xi$ does not affect the cost such that both pure strategies
$\alpha_{i}=0$ and $\alpha_{i}=1$ have the same utility as the mixed strategy NE $\alpha^*_{i}$. 
Therefore, the pure strategy $\alpha_i=0$ or $\alpha_i=1$ can always weakly dominate mixed strategies $\alpha_i^*+\xi$. 

Similarly, the $\xi$ perturbation of mixed strategy NE $\alpha_j^*$ also concludes that the pure strategy $\alpha_j=0$ or $\alpha_j=1$ can always weakly dominate mixed strategies $\alpha_j^*+\xi$. 

In conclusion, 
a pure strategy NE can always weakly dominate mixed strategy NEs in the link
formation game
$\mathbf G_{\mathcal L}(d)$, implying that 
a pure strategy NE exists.  
Hence, 
Algorithm~\ref{alg:bigpicture} guarantees at least one topology. 

\section*{References}
\bibliography{mybibfile_v11}


\end{document}